\numberwithin{equation}{section}
\newtheorem{Theorem}{Theorem}[section]
\newtheorem{Corollary}[Theorem]{Corollary}
\newtheorem{Proposition}[Theorem]{Proposition}
{ \theoremstyle{definition}
\newtheorem{Definition}[Theorem]{Definition}

\newtheorem{Remark}[Theorem]{Remark} }
\begin{document}

\allowdisplaybreaks

\newcommand{\arXivNumber}{1803.01247}

\renewcommand{\PaperNumber}{099}

\FirstPageHeading

\ShortArticleName{Generalized Lennard-Jones Potentials, SUSYQM and Differential Galois Theory}

\ArticleName{Generalized Lennard-Jones Potentials,\\ SUSYQM and Differential Galois Theory}

\Author{Manuel F. ACOSTA-HUM\'ANEZ~$^{\dag^1}$, Primitivo B. ACOSTA-HUM\'ANEZ~$^{\dag^2\dag^3}$\newline and Erick TUIR\'AN~$^{\dag^4}$}

\AuthorNameForHeading{M.F.~Acosta-Hum\'anez, P.B.~Acosta-Hum\'anez and E.~Tuir\'an}

\Address{$^{\dag^1}$~Departamento de F\'{\i}sica, Universidad Nacional de Colombia,\\
\hphantom{$^{\dag^1}$}~Sede Bogot\'{a}, Ciudad Universitaria 111321, Bogot\'a, Colombia}
\EmailDD{\href{mailto:mafacostahu@unal.edu.co}{mafacostahu@unal.edu.co}}

\Address{$^{\dag^2}$~Facultad de Ciencias B\'asicas y Biom\'edicas, Universidad Sim\'on Bol\'{\i}var,\\
\hphantom{$^{\dag^2}$}~Sede 3, Carrera 59 No.~58--135. Barranquilla, Colombia}
\EmailDD{\href{mailto:primitivo.acosta@unisimonbolivar.edu.co}{primitivo.acosta@unisimonbolivar.edu.co}}
\URLaddressDD{\url{http://www.intelectual.co/primi/}}

\Address{$^{\dag^3}$~Instituto Superior de Formaci\'on Docente Salom\'e Ure\~na - ISFODOSU,\\
\hphantom{$^{\dag^3}$}~Recinto Emilio Prud'Homme, Calle R. C. Tolentino \#51, esquina 16 de Agosto,\\
\hphantom{$^{\dag^3}$}~Los Pepines, Santiago de los Caballeros, Rep\'ublica Dominicana}

\Address{$^{\dag^4}$~Departamento de F\'{\i}sica y Geociencias, Universidad del Norte,\\
\hphantom{$^{\dag^4}$}~Km 5 V\'{\i}a a Puerto Colombia AA 1569, Barranquilla, Colombia}
\EmailDD{\href{mailto:etuiran@uninorte.edu.co}{etuiran@uninorte.edu.co}}

\ArticleDates{Received May 01, 2018, in final form September 14, 2018; Published online September 19, 2018}

\Abstract{In this paper we start with proving that the Schr\"{o}dinger equation (SE) with the classical $12-6$ Lennard-Jones (L-J) potential is nonintegrable in the sense of the differential Galois theory (DGT), for any value of energy; i.e., there are no solutions in closed form for such differential equation. We study the $10-6$ potential through DGT and SUSYQM; being it one of the two partner potentials built with a~superpotential of the form $w(r)\propto 1/r^5$. We also find that it is integrable in the sense of DGT for zero energy. A~first analysis of the applicability and physical consequences of the model is carried out in terms of the so called
De Boer principle of corresponding states. A comparison of the second virial coefficient~$B(T)$ for both potentials shows a good agreement for low temperatures. As a consequence of these results we propose the $10-6$ potential as an integrable alternative to be applied in further studies instead of the original $12-6$ L-J potential. Finally we study through DGT and SUSYQM the integrability of the SE with a generalized $(2\nu-2)-\nu$ L-J potential. This analysis do not include the study of square integrable wave functions, excited states and energies different than zero for the generalization of L-J potentials.}

\Keywords{Lennard-Jones potential; differential Galois theory; SUSYQM; De Boer principle of corresponding states}

\Classification{12H05; 81V55; 81Q05}

\section{Introduction}
The Lennard-Jones potential (L-J) was proposed in 1931 in order to model the concurrence between the long-range attraction and the short-range repulsion in radial interatomic interactions~\cite{LJ}. In a later work, the
description of such potential was employed in order to describe the equation of state of a gas in terms of its interatomic forces~\cite{LJ2}, thus concluding and enhancing an investigation started by Mie in 1903~\cite{Mie}. The L-J potential is usually used, at the level of classical statistical mechanics, to study the behavior of fluid materials, ranging from simple molecules to polymers and proteins~\cite{HansenVerlet, Binder, Mecke}. In theoretical quantum chemistry, among many applications, we point out: its implementation in the theory of molecular orbitals, allowing to compute the tendency of two electrons in the same space orbital to keep each other apart because of the repulsive field between them~\cite{Hurley}; the numerical implementations in order to compute the transferable inter-molecular potential functions (TIPS) in alcohols, ethers and water, that have given an understanding of the interactions of these chemical compounds in solvents~\cite{Jorgensen}. Also a mathematical model that has been proposed for calculating the isosteric heat of adsorption of simple fluids onto flat surfaces. On this respect, theoretical and experimental results were compared in order to study the influence of the choice of the intermolecular potential parameters~\cite{Muleros}. Finally, a experimental methodology and theoretical calculations applying the Lennard-Jones potential, for determining micropore-size distributions, obtained from physical adsorption isotherm data, have provided valuable microstructural information, which is still widely used today~\cite{HK, Olivier, Storck}.

With the increase of numerical techniques, calculations with explicit solutions in physical models don't have in the present the same importance as in past decades. Nevertheless exact solutions when available, have always served as elucidating tools for finding general properties of the system, which otherwise could remain hidden. The main motivation of this paper is the application of supersymmetric quantum mechanics (SUSYQM) and differential Galois theory (DGT) to obtain explicit solutions of the Schr\"{o}dinger equation (SE) with variants of the Lennard-Jones potential, as well as the set of eigenvalues associated to each solution.

SUSYQM, introduced by E.~Witten in 1981, is the simplest example where supersymmetry can be dynamically broken~\cite{witten}. In spite of its initial character of a toy model; SUSYQM has earned importance in the recent decades, because it served as a starting point to the development of attractive theoretical features and concepts like shape invariance, isospectrality and factorization, that give new perspectives to old problems in quantum mechanics, like the integrability of the SE, see for example~\cite{acthesis,Gango,cks} and the path integral formulation of classical mechanics~\cite{Reuter-Gozzi}. On the other hand, there is plenty of papers in
mathematical physics wherein DGT has been applied; see for example~\cite{ac,aabd, aad,acbl,acbl2,acblva} for applications to study the non-integrability of Hamiltonian systems. For applications in the integrability of the SE, see~\cite{acthesis,acbook,akmss,acmowe,acpan,acsu2,acsu}. For applications of differential Galois theory to other quantum integrable systems see~\cite{braverman,semenov}. The main Galoisian tools used in some of these papers are the Hamiltonian algebrization and the Kovacic's algorithm. These tools have led and still lead, to deduce exact solutions in several areas of mathematical physics.

The structure of this paper is as follows. Section~\ref{section2} is devoted to the theoretical background necessary to understand the rest of the paper. It summarizes topics such as the Schr\"{o}dinger equation for central potentials, Lennard-Jones potentials $12-6$, $10-6$ and $(2\nu-2)-\nu$, SUSYQM, the De Boer principle of corresponding states, the virial equation and DGT. In Section~\ref{section3} we study the integrability of the SE with the usual $12-6$ Lennard-Jones potential, as well as the alternative versions $10-6$ and $(2\nu-2)-\nu$. Our contributions consist in the deduction of algebraic and physical conditions over the parameters of such SE's to get their integrability in the sense of DGT and the superpotentials in the integrable cases. A first study of physical consequences will also be detailed in this section. In Section~\ref{section4} some remarks concerning future works are established.

\section{Preliminaries}\label{section2}
\subsection*{The Schr\"{o}dinger equation for a central potential}

We are interested in studying a physical model for a many-body system where the main contribution of the interaction of its constituents is pairwise and radial in nature. In addition, the physical conditions of the system (temperature, density, etc) are such, that its quantum behavior is non-negligible. In this section we set shortly the theoretical background, in order to establish our physical model with a central potential, and also the notation to be applied in the rest of the paper~\cite{Cohen-T1}. The Hamiltonian for a system of two spinless particles with masses~$m_{1}$ and~$m_{2}$ interacting via a radial potential $V\left( \left\vert \vec{r}_{1}-\vec{r}_{2}\right\vert \right) $ is given by
\begin{gather}
H=T+V=\frac{p_{1}^{2}}{2m_{1}}+\frac{p_{2}^{2}}{2m_{2}}+V(\vert \vec{r}_{1}-\vec{r}_{2}\vert). \label{dos parts}
\end{gather}
It is an usual subject of textbooks in classical mechanics to show that~(\ref{dos parts}) can be separated into two parts, one related to the motion of the center of mass $\vec{R}$ of the system and the other related to the
relative motion of the particles. The new coordinate system is given by the following transformation rules
\begin{gather*}
\vec{R}=\frac{m_{1}\vec{r}_{1}+m_{2}\vec{r}_{2}}{m_{1}+m_{2}} ,\qquad \vec{r}=\vec{r}_{1}-\vec{r}_{2} ,\qquad \mu =\frac{m_{1}m_{2}}{m_{1}+m_{2}} , \qquad M=m_{1}+m_{2} , \nonumber \\ \vec{p}_{r}=\mu \frac{{\rm d}\vec{r}}{{\rm d}t} ,\qquad \vec{p}_{R}=M\frac{{\rm d}\vec{R}}{{\rm d}t}, %\label{cords relats}
\end{gather*}
where $M$ is the total mass of the system, $\mu $ is called the reduced mass. The Hamiltonian in the new coordinates takes the form
\begin{gather}
H=\frac{p_{r}^{2}}{2\mu }+\frac{p_{R}^{2}}{2M}+V(r),\label{Con CM}
\end{gather}
where $p_{r}$ and $p_{R}$ are the canonical momenta conjugated respectively to the coordinates $r= \vert \vec{r}_{1}-\vec{r}_{2} \vert $ and $R= \vert \vec{R} \vert $. Since we are not dealing with external forces, the motion of the center of mass is uniform rectilinear. For several analysis it is suitable to work in a frame at rest with the center of mass, which is still an inertial reference frame, in that case the Hamiltonian~(\ref{Con CM}) is reduced to
\begin{gather}
H=\frac{p_{r}^{2}}{2\mu }+V(r). \label{el que es}
\end{gather}
The Hamiltonian in (\ref{el que es}) represents the energy of the relative motion of the two particles; it describes the motion of a fictitious particle, the \emph{relative particle} with a mass given by the reduced mass $\mu $ and a position and momentum given by the relative coordinates~$\vec{r}$ and~$\vec{p}_{r}$. The quantum mechanical model of our interest is based on this Hamiltonian. The usual rules of quantization in the position
representation lead to the time-independent Schr\"{o}dinger equation for our two-particles system
\begin{gather}
\left[ -\left( \frac{\hbar ^{2}}{2\mu }\right) \vec{\nabla}^{2}+V(r) \right] \Psi ( \vec{r}) =E\Psi ( \vec{r}).\label{Ec schroed 1}
\end{gather}
Since $V(r) $ is a rational central potential, the eigenfunctions $\Psi( \vec{r}) $ are separable into radial and angular parts, the last one given by the spherical harmonics
\begin{gather*}
\Psi ( \vec{r}) =\frac{1}{r}u_{k,l}(r) Y_{l}^{m}( \theta ,\varphi ). %\label{Arm Esf}
\end{gather*}
The differential equation of our interest corresponds to the radial part of~(\ref{Ec schroed 1}) as follows
\begin{gather*}
\left[ -\left( \frac{\hbar ^{2}}{2\mu }\right) \frac{{\rm d}^{2}}{{\rm d}r^{2}}+\frac{l(l+1) }{2\mu r^{2}}+V(r) \right] u_{k,l}(r) =E_{k,l}u_{k,l}(r) , %\label{Ec schroed 2}
\end{gather*}
where $l$ and $m$ are the usual quantum numbers for angular momentum; $k$ represents the different values of energy for fixed~$l$, and it can be either discrete or continuous. Defining the \emph{effective radial potential} as $V^{\rm eff}(r) \equiv l(l+1) /\big( 2\mu r^{2}\big) +V(r)$ and leaving the second derivative in $r$ on one side, we have
\begin{gather}
\left( \frac{2\mu }{\hbar ^{2}}\right) \big[ V^{\rm eff}(r) -E_{k,l}\big] u_{k,l}(r) =\frac{{\rm d}^{2}}{{\rm d}r^{2}}u_{k,l}(r)\label{Schroed com hbar}
\end{gather}
at this point we define a rescaled potential $v^{\rm eff}(r)\equiv \big( \frac{2\mu }{\hbar ^{2}}\big) V^{\rm eff}(r)\ $and a similarly rescaled energy $\big( \frac{2\mu }{\hbar ^{2}}\big) E_{k,l}\equiv \varepsilon _{k,l}$; in this case equation~(\ref{Schroed com hbar}) turns out to be
\begin{gather}
\big[ v^{\rm eff}(r)-\varepsilon _{k,l}\big] u_{k,l}(r) =\frac{{\rm d}^{2}}{{\rm d}r^{2}}u_{k,l}(r). \label{Rad Schroed scaled}
\end{gather}
In this way it is natural to define a~rescaled Hamiltonian as $\mathcal{H}\equiv \big[{-}\frac{{\rm d}^{2}}{{\rm d}r^{2}}+v(r)\big] $ in order to recover~(\ref{Rad Schroed scaled}):
\begin{gather}
\mathcal{H}_{\rm eff}u_{k,l}(r) \equiv \left[ -\frac{{\rm d}^{2}}{{\rm d}r^{2}}+v^{\rm eff}(r)\right] u_{k,l}(r) =\varepsilon_{k,l}u_{k,l}(r). \label{Con H}
\end{gather}
 The case for $l=0$ defines the non-effective potential, and is also of great interest for our study
\begin{gather}
\mathcal{H}u_{k}(r) \equiv \left[ -\frac{{\rm d}^{2}}{{\rm d}r^{2}}+v(r)\right] u_{k}(r) =\varepsilon_{k}u_{k}(r), \label{Con H l0}
\end{gather}
where we have simplified $u_{k,l=0}$ and $\varepsilon_{k,l=0}$ to $u_{k}$ and $\varepsilon_{k}$, respectively. We observe that (\ref{Con H l0}) is a~rescaled version of
\begin{gather}
\widetilde{\mathcal{H}}u_{k}(r) \equiv \left[ -\left(\frac{\hbar^{2}}{2\mu}\right)\frac{{\rm d}^{2}}{{\rm d}r^{2}}+V(r)\right] u_{k}(r) =E_{k}u_{k}(r). \label{Hl0 Non}
\end{gather}
equations (\ref{Con H}), (\ref{Con H l0}) and (\ref{Hl0 Non}) are the subject of our mathematical and physical analysis in Section~\ref{section3}.

\subsection*{The $\boldsymbol{12-6}$ Lennard-Jones potential and its generalizations}

The 12-6 Lennard-Jones potential is usually presented in terms of two constants $A$ and $B$
\begin{gather}
V_{12-6}(r)=-{\frac{A}{r^{6}}}+{\frac{B}{r^{12}}}, \label{LJ 12-6}
\end{gather}
where the negative term $-A/r^{6}$ leads to van der Waals attractive fields and comes from the second-order correction in perturbation theory to the dipole-dipole interaction between two atoms~\cite{Cohen-T1}. The positive
term $B/r^{12}$ models the short range electronic repulsion between atoms and has no theoretical justification; it was empirically chosen because it fits reasonably good data coming from experiments with diatomic gases \cite{LJ}. An alternative version is given by
\begin{gather}\label{12-6 Sigma-Eps}
V_{12-6}(r) = 4\epsilon\left[ \left( \frac{\sigma}{r}\right)^{12}-\left( \frac{\sigma}{r}\right) ^{6}\right], \qquad A =4\epsilon \sigma ^{6} ,\qquad B=4\epsilon \sigma ^{12},
\end{gather}
where $\epsilon $ is the atomic depth of the potential well, $\sigma $ is the finite distance at which the inter-particle potential is zero and $r$ is the distance between the particles (see Fig.~\ref{Fig1}). In mathematical terms, $\sigma >0 $ and $\epsilon >0$ satisfy that $V(\sigma )=0$ and $V\big(\sqrt[6]{2}\sigma\big)=-\epsilon $; this means that $\sigma $ is a zero potential length and the point $\big(
\begin{matrix} \sqrt[6]{2}\sigma , & -\epsilon \end{matrix} \big) $ is the local minimum of the potential in the inter\-val~$(0,\infty )$. It can easily be shown that there is no other critical point in such interval. In physical terms the well depth $\epsilon $ and the zero potential length $\sigma $ are parameters that describe the cohesive and repulsive forces that take place in a gas or liquid at the molecular level. $\epsilon $ measures the strength of the attraction between pairs of molecules and~$\sigma $ is the radius of the repulsive core when two molecules collide.
\begin{figure}[t]\centering
\includegraphics[width=80mm]{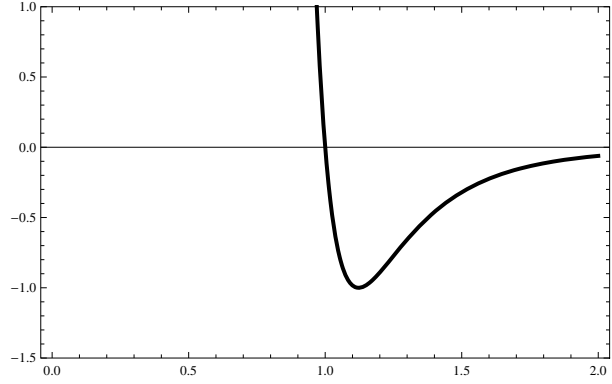}
\caption{$\frac{v(r)}{\epsilon}$ vs.\ $\frac{r}{\sigma}$ plot for the rescaled $12-6$ Lennard-Jones potential given in equation~\eqref{Gral LJ Resc} for $C=0$.}\label{Fig1}
\end{figure}

In order to explore with the differential Galois theory the integrability of the Schr\"{o}dinger equation with the Lennard-Jones potential (\ref{LJ 12-6}) and other related cases, we introduce the generalized effective version with arbitrary powers $\nu $ and $\delta$ given in~\cite{LJ}
\begin{gather*}
 V_{\delta -\nu }(r) \equiv -{\frac{A}{r^{\nu }}}+{\frac{B}{r^{\delta }}}, \qquad
 V_{\delta -\nu }^{\rm eff}(r) \equiv V_{\delta -\nu }(r)+\frac{C}{r^{2}}=-{\frac{A}{r^{\nu }}}+{\frac{B}{r^{\delta}}}+\frac{C}{r^{2}}, %\label{Gral LJ pot}
\end{gather*}
where $0<\nu< \delta$, $A>0$, $B>0$, $C\geq 0$. Its rescaled version is given by
\begin{gather}
v_{\delta -\nu }(r) \equiv \left( \frac{2\mu }{\hbar ^{2}}\right)V_{\delta -\nu }(r)= -{\frac{\bar{A}}{r^{\nu }}}+{\frac{\bar{B}}{r^{\delta }}},\label{LJ Resc 1} \\
v_{\delta -\nu }^{\rm eff}(r) \equiv \left( \frac{2\mu }{\hbar ^{2}}\right)V_{\delta -\nu }^{\rm eff}(r)=-{\frac{\bar{A}}{r^{\nu }}}+{\frac{\bar{B}}{r^{\delta }}}+\frac{\bar{C}}{r^{2}}, \label{Gral LJ Resc} \\
\bar{A}\equiv \left( \frac{2\mu }{\hbar ^{2}}\right) A,\qquad \bar{B}\equiv \left( \frac{2\mu }{\hbar ^{2}}\right) B,\qquad \bar{C}\equiv \left( \frac{2\mu }{\hbar ^{2}}\right) C. \nonumber
\end{gather}
The special case for $\delta=2\nu-2$ and some of its analytic advantages has been studied by J.~Pade in~\cite{Pade}
\begin{gather*}
 V_{(2\nu-2)-\nu }(r) \equiv -{\frac{A}{r^{\nu }}}+{\frac{B}{r^{2\nu-2}}}. %\label{LJ pot 2nu}
\end{gather*}
 In the mentioned article, a special attention has been drawn to the $\nu=6$ case, and its ability to fit experimental data:
\begin{gather}
 V_{10-6 }(r) \equiv -{\frac{A}{r^{6 }}}+{\frac{B}{r^{10}}}. \label{LJ pot Diez-Seis}
\end{gather}
In Section~\ref{section3} we will explore the interesting features of (\ref{LJ pot Diez-Seis}) in the realm of SUSYQM.

\subsection*{The second virial coefficient and its dependence on the potential}
The virial equation of state for a gas expresses the deviation from the ideal behavior as a power series in the density $\rho$:
\begin{gather*}
\frac{p}{kT}=\rho+B_{2}(T)\rho^{2}+B_{3}(T)\rho^{3}+\cdots.
\end{gather*}
The coefficients $B_{n}(T)$ are called the virial coefficients and they are unique real functions of the temperature. The second virial coefficient~$B_2(T)$ represents the most significant deviation from the ideal behavior, since it is the prefactor in the term of order~$\rho^{2}$ in the series. It is a~customary result from equilibrium statistical mechanics (see~\cite{Quarrie}) that~$B_2(T)$ is a radial integral of the pair-potential $v(r)$ given by
\begin{gather}
B_2(T)=2\pi\int^{\infty}_0\big(1-{\rm e}^{-\frac{v(r)}{kT}}\big)r^2{\rm d}r. \label{Virial B}
\end{gather}
A thorough study by Keller and Zumino of the properties of~(\ref{Virial B}) has shown that a unique potential function can only be obtained from $B_2(T)$ if the potential behaves monotonically~\cite{Zumino}. This is clearly not the case for the Lennard-Jones potential and all its variants. As a~result, there exists an ambiguity in the choice of the microscopic potential, leading to the same thermodynamic function~$B_2(T)$. In addition to this analytic inexactness there is also the limited range of measurements of~$B_2(T)$ for low temperatures. The aforementioned limitations lead to several possibilities of choice for $v(r)$, at least from measurements of $B_2$, specially for the power of the repulsive term~$B/r^{ \delta}$. The possibilities range from $n=9$ to $n=14$ since the early works of Lennard-Jones (see~\cite{L-J1924I,L-J1924II}) and De Boer (see~\cite{DeBoer38}). We come back to this point in the next section, giving some hints about the applicability of the $10-6$ potential for low temperatures.

\subsection*{The dimensionless Schr\"{o}dinger equation\\ and the De Boer principle of corresponding states}

In 1948 J.~De Boer introduced a dimensionless representation of the Schr\"{o}dinger equation employing $\sigma $ and $\epsilon $ in order to construct dimensionless lengths and energies~\cite{DeBoer48}
\begin{gather}
\tilde{r}\equiv \frac{r}{\sigma},\qquad \tilde{E}\equiv \frac{E}{\epsilon} ,\qquad \tilde{V}\equiv \frac{V}{\epsilon}. \label{De Boer Trafo}
\end{gather}
As a result, the radial Schr\"{o}dinger equation (\ref{Hl0 Non}) for $l=0$ can be transformed into the dimensionless form given by
\begin{gather}
\left[ -\frac{\Lambda ^{2}}{2}\frac{{\rm d}^{2}}{{\rm d}\tilde{r}^{2}}+\tilde{V}(\tilde{r})\right] u(\tilde{r}) =\tilde{E}u(\tilde{r}) \label{Dim Less}
\end{gather}
provided that the potential $V(r)$ can be expressed in the generic form $V(r)=\epsilon f(r/\sigma)$, where~$f(r)$ is a well-defined dimensionless interaction function and $\Lambda \equiv \hbar /(\sigma \sqrt{\mu \epsilon })$~\cite{DeBoer48}. From~(\ref{Dim Less}) we see that $\Lambda$, the so-called De Boer parameter, is the only parameter in the equation that gives information about the particular microscopic characteristics of the system. From this fact, De Boer was able to formulate his principle of corresponding states, which is a ``quantum'' generalization of the van der Waals law of corresponding states for classical gases and liquids~\cite{Guggenheim,Pitzer, VdW}. The De Boer principle of corresponding states tells us that two different systems with equal value of $\Lambda $ have identical thermodynamic properties~\cite{DeBoer48}. In Section~\ref{section3} we exploit this principle in order to give an interpretation to the supersymmetric integrable model for zero energy, we propose with the $10-6$ Lennard-Jones potential.
\subsection*{Supersymmetric quantum mechanics}
We implement in this work the simplest realisation of SUSYQM for one-dimensional quantum systems~\cite{Gango}, which includes besides the Hamiltonian operator $H$, two \emph{fermionic} operators $Q^{\pm }$ or \emph{supercharges} such that they commute with~$H$
\begin{gather}
\big[ Q^{\pm },H\big] =0 \label{Qmas_men}
\end{gather}%
and satisfy the algebra
\begin{gather}
\big\{ Q^{-},Q^{+}\big\} =H ,\qquad \big( Q^{\pm }\big) ^{2}=0.\label{Alg_Q_mas_men}
\end{gather}
The second relation means that $Q^{\pm }$ are nilpotent operators. A usual representation of the algebra, given in equations~\eqref{Qmas_men} and~\eqref{Alg_Q_mas_men}, presents the Hamiltonian~$H$ of the system, as a~diagonal two component matrix of partner Hamiltonians~$H_{\pm }$
\begin{gather*}
H\equiv \left(
\begin{matrix}
H_{+} & 0 \\
0 & H_{-}
\end{matrix}
\right), %\label{Partner Hams}
\end{gather*}
where $Q^{\pm }$ are $2\times 2$ diagonal matrices involving the \emph{Ladder} operators $A^{\pm }$
\begin{gather*}
Q^{-}\equiv \left(
\begin{matrix}
0 & 0 \\
A^{-} & 0
\end{matrix}
\right) ,\qquad Q^{+}\equiv \left(
\begin{matrix}
0 & A^{+} \\
0 & 0
\end{matrix}
\right), %\label{Ladder}
\end{gather*}
such that%
\begin{gather}
H\equiv \big\{ Q^{-},Q^{+}\big\} \equiv Q^{-}Q^{+}+Q^{+}Q^{-}=\left(
\begin{matrix}
A^{+}A^{-} & 0 \\
0 & A^{-}A^{+}%
\end{matrix}
\right) \equiv \left(
\begin{matrix}
H_{+} & 0 \\
0 & H_{-}
\end{matrix}
\right), \label{Anticomm}
\end{gather}
and $A^{\pm }$ are defined in terms of the derivative $\frac{{\rm d}}{{\rm d}x}$ and an arbitrary complex function~$w(r) $, called the \emph{superpotential}
\begin{gather}
A^{\pm }=\mp \frac{{\rm d}}{{\rm d}r}+w(r). \label{Superpot}
\end{gather}
Since the products $A^{+}A^{-}$ and $A^{-}A^{+}$ with $A^{\pm }$ defined in~(\ref{Superpot}) lead to
\begin{gather*}
A^{+}A^{-}=-\frac{{\rm d}^{2}}{{\rm d}r^{2}}+w^{2}-\frac{{\rm d}w}{{\rm d}r} ,\qquad A^{-}A^{+}=-\frac{{\rm d}^{2}}{{\rm d}r^{2}}+w^{2}+\frac{{\rm d}w}{{\rm d}r},
\end{gather*}
then, from (\ref{Anticomm}) it results natural to identify
\begin{gather*}
H_{\pm }=-\frac{{\rm d}^{2}}{{\rm d}r^{2}}+w^{2}\pm \frac{{\rm d}w}{{\rm d}r}, %\label{H_mas_men}
\end{gather*}
which leads directly to a definition of the so-called partner potentials $v_{\pm }$ given by
\begin{gather}
v_{\pm }\equiv w^{2}\pm \frac{{\rm d}w}{{\rm d}r}, \label{Partner Pots}
\end{gather}%
such that
\begin{gather*}
H_{\pm }=-\frac{{\rm d}^{2}}{{\rm d}r^{2}}+v_{\pm }. %\label{H_mas_men_pots}
\end{gather*}
Each of the two equations in (\ref{Partner Pots}) define a Riccati differential equation for the superpotential~$w$, if $v_{\pm }$ are known. Let's recall that the superpotential can also be found from the zero-energy base state $\psi_0$, by computing $w=-\psi'_0/\psi_0$, where $\psi_0$ is a solution of the Schr\"{o}dinger equation with the $v_{-}$ potential
\begin{gather}\label{groundfunction} \psi''_0=v_{-}\psi_0, \qquad
\psi_0={\rm e}^{-\int w{\rm d}r} \end{gather} (see Witten~\cite{witten}). Riccati equations play a fundamental role in the study of integrability in SUSYQM. For a systematic study of this subject see references \cite{acthesis,acbook}.

\subsection*{Differential Galois theory}
Exact solutions of differential equation is a hard but important task in different disciplines. Sometimes numerical methods cannot be implemented in general, if the equation has free generic parameters. \emph{Differential Galois theory}, also known as \emph{Picard--Vessiot theory}, is a powerful theory to solve explicitly, in the case when it is possible, linear differential equations.

Analogous to the concept of field in classical Galois theory, there exists the concept \emph{dif\-fe\-rential field} in differential Galois theory, which is a field satisfying the differential Leibniz rules. Similarly, a~differential extension~$L$ of the differential field~$K$ means that~$K$ is a subfield of~$L$ preserving the differential Leibniz rules. In particular for a given linear differential with coefficients in~$K$, if $C_L=C_K$ (the field of constants of~$L$ is the same field of constants of~$K$) and~$L$ is generated over~$K$ by a fundamental set of solutions of such differential equation, then~$L$ is called the \emph{Picard--Vessiot extension of~$K$}. Recall that the field of constants of~$K$ is defined as $C_K:=\{k\in K\colon k'=0\}$, where $':={\rm d}/{\rm d}x$.

In the same way as we are interested in finding the roots of the polynomials
over a base field, usually $\mathbb{Q}$, using arithmetical and algebraic conditions, we would
like to have explicit solutions of differential equations over a
differential base field $K=\mathbb{C}(x)$, with field of constants $C_K=\mathbb{C}$, using elementary functions and
quadratures. The differential Galois theory considers more general differential fields, but for our purpose is enough to consider $\mathbb{C}(x)$.
Thus, the differential Galois group ($\mathrm{DGal}(L/K)$), as analogically as in the polynomial case, is the group of all \emph{differential automorphisms} that restricted to the base field coincide with the identity. Moreover if $\langle y_1,y_2,\ldots,y_n\rangle$ is a basis of solutions of
\begin{gather*}\frac{{\rm d}^ny}{{\rm d}x^n}+a_{n-1}\frac{{\rm d}^{n-1}y}{{\rm d}x^{n-1}}+\cdots+a_1\frac{{\rm d}y}{{\rm d}x}+a_0y=0,\qquad a_i\in\mathbb{C}(x),\end{gather*} then for each differential automorphism $\sigma\in\mathrm{DGal}(L/K)$ there exists a matrix $A_\sigma\in \mathrm{GL}(n,\mathbb{C})$ (i.e., $a_{ij}\in \mathbb{C}$, $1\leq i,j\leq n$ and $\det(A_\sigma)\neq 0$) such that
\begin{gather*}\sigma(\mathbf{Y})=A_\sigma\mathbf{Y},\qquad \mathbf{Y}=\begin{pmatrix}z_1\\z_2\\ \vdots\\ z_n
\end{pmatrix},\\ A_\sigma=\begin{pmatrix}\alpha_{11}&\alpha_{12}&\ldots& \alpha_{1n}\\
\alpha_{21}&\alpha_{22}&\ldots& \alpha_{2n}\\ \vdots & \vdots& \vdots&\vdots\\
\alpha_{n1}&\alpha_{n2}&\ldots& \alpha_{nn}\end{pmatrix},\qquad \mathrm{DGal}(L/K)\cong G\subset \mathrm{GL}(n,\mathbb{C}).\end{gather*}
In particular, $\mathrm{SL}(n,\mathbb{C})=\{A\in\mathrm{GL}(n,\mathbb{C})\colon \det(A)=1\}$. Due to $G=\{A_\sigma\colon \sigma\in \mathrm{DGal}(L/K)\}\subset \mathrm{GL}(n,\mathbb{C})$, we see that $\mathrm{DGal}(L/K)$ has a faithful representation as an algebraic group of matrices in where $G^0$ denotes the connected identity component of $G$ (the biggest algebraic connected subgroup of $G$). In this terminology, we say that a linear differential equation is \emph{integrable in the sense of differential Galois theory} whether the connected identity component of its differential Galois is a solvable group. Moreover, this definition of integrability leads to the obtaining of solutions in closed form if and only if $G^0$ is solvable, see~\cite{vasi} for full explanation and details. From now on, integrable in this paper means integrable in terms of differential Galois theory, see~\cite{Singer81}.

To accomplish our purposes, we are interested in second-order differential equations of the form
\begin{gather}\label{soldeq}
z''+az'+bz=0,\qquad a,b\in \mathbb{C}(x).
\end{gather}
Equation \eqref{soldeq} can be transformed into equations in the form
\begin{gather}\label{redsec}
 y''=ry,\qquad r={a^2\over 4}+{a'\over 2}-b,\qquad \text{and}\qquad z={\rm e}^{{-1\over 2}\int a {\rm d}x}y,
\end{gather}
see \cite{almp}. Jerald Kovacic developed in 1986 an algorithm to solve explicitly second-order differential equations with rational coefficients given in the form of equation~\eqref{redsec}, see~\cite{kov86}. In~\cite{dulo} another version of Kovacic's algorithm is presented, and it is applied to solve several second-order differential equations with special functions as solutions. The version of Kovacic's algorithm presented here corresponds to~\cite{acbl}, see also~\cite{acthesis, almp, acmowe, acsu}.

As mentioned, Kovacic's algorithm cannot be applied when the coefficients of the second-order differential equations are not rational functions. Therefore we need to transform such differential equations to apply Kovacic's algorithm. A possible solution to this problem was developed in~\cite{acthesis,acbook, acmowe}, the so-called Hamiltonian algebrization. However, we are interested in transformations that preserve the differential Galois
group (at least their connected identity component), in other words, the transformation must be either \emph{isogaloisian}, \emph{virtually isogaloisian} or \emph{strongly isogaloisian}, see~\cite{acthesis,acmowe}.

One important differential equation in this work is the Whittaker's differential equation, which is given by
\begin{gather}\label{whittaker} \partial_x^2y=\left(\frac14-{\kappa\over x}+{4\mu^2-1\over 4x^2}\right)y.\end{gather}
The Galoisian structure of this equation has been deeply studied in~\cite{Martinet-Ramis}, see also~\cite{dulo}. The following theorem provides the conditions of the integrability in the sense of differential Galois theory of equation~\eqref{whittaker}.

\begin{Theorem}[\cite{Martinet-Ramis}]\label{thmarram} The Whittaker's differential equation \eqref{whittaker} is integrable $($in the sense of differential Galois theory$)$ if and only if either, $\kappa+\mu\in\frac12+\mathbb{N}$, or $\kappa-\mu\in\frac12+\mathbb{N}$, or $-\kappa+\mu\in\frac12+\mathbb{N}$, or $-\kappa-\mu\in\frac12+\mathbb{N}$.
\end{Theorem}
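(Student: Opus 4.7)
The plan is to apply Kovacic's algorithm to the Whittaker equation~\eqref{whittaker}, which is already in the reduced form $y''=ry$ with
$$r=\frac{1}{4}-\frac{\kappa}{x}+\frac{4\mu^{2}-1}{4x^{2}}\in\mathbb{C}(x).$$
The first step is to read off the local data: at $x=0$ the indicial equation $\rho(\rho-1)=\mu^{2}-\tfrac14$ has roots $\rho_{\pm}=\tfrac12\pm\mu$, while $x=\infty$ is an irregular singularity of rank one whose formal fundamental system takes the shape $e^{\pm x/2}x^{\pm\kappa}(1+O(1/x))$. This asymptotic structure will be the decisive ingredient.

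Next I would argue that among the four cases of Kovacic's algorithm only Case~1 can possibly apply. The transcendental exponential factors $e^{\pm x/2}$ at $\infty$ immediately rule out Case~3, in which both solutions would have to be algebraic over $\mathbb{C}(x)$ and hence of at most polynomial growth. Excluding the dihedral Case~2 is more delicate, but follows from the fine analysis of the Stokes matrices at the rank-one irregular singularity at infinity; this is the analytic core of~\cite{Martinet-Ramis} and is also treated in~\cite{dulo}. Case~4 being non-integrable by definition, integrability is thus equivalent to the existence of a Liouvillian solution $y=\exp\!\int\omega\,dx$ with $\omega\in\mathbb{C}(x)$, i.e., to a rational solution of the Riccati equation $\omega'+\omega^{2}=r$.

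Guided by the local data I would then solve this Riccati equation with the ansatz
$$\omega(x)=\frac{\varepsilon_{1}}{2}+\frac{\rho}{x}+\frac{P'(x)}{P(x)},$$
where $\varepsilon_{1}\in\{+1,-1\}$, $\rho\in\{\tfrac12+\mu,\tfrac12-\mu\}$, and $P\in\mathbb{C}[x]$ is a monic polynomial of degree $n\in\mathbb{N}$, so that $y=e^{\varepsilon_{1}x/2}x^{\rho}P(x)$. Substitution into $\omega'+\omega^{2}=r$ forces $\rho$ to be an indicial exponent at the origin (matching the coefficient of $1/x^{2}$) and, after extracting the coefficient of $1/x$ at infinity, yields $\varepsilon_{1}(\rho+n)=-\kappa$. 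Writing $\rho=\tfrac12+\varepsilon_{2}\mu$ with $\varepsilon_{2}\in\{+1,-1\}$ and rearranging turns this into
$$-\varepsilon_{1}\kappa-\varepsilon_{2}\mu=\tfrac12+n\in\tfrac12+\mathbb{N},$$
which, as $(\varepsilon_{1},\varepsilon_{2})$ ranges over the four sign patterns, is precisely the fourfold alternative of Theorem~\ref{thmarram}. Conversely, whenever one of these equalities holds, the remaining matchings produce a recurrence for the coefficients of $P$ that truncates at degree $n$ (its polynomial solutions being generalised Laguerre polynomials), so the ansatz delivers a genuine Liouvillian solution and $G^{0}$ is solvable.

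The main obstacle is not the Case~1 calculation above, which is essentially mechanical, but the honest exclusion of Case~2. The dihedral case is in principle compatible with the leading exponential asymptotics at infinity, since the product of the two formal solutions is algebraic there, so ruling it out genuinely requires the finer formal and Stokes structure of the rank-one irregular singularity. Once that reduction is granted, the theorem reduces to the explicit degree-matching equality displayed above.
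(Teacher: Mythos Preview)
The paper does not prove Theorem~\ref{thmarram} at all: it is stated with a citation to~\cite{Martinet-Ramis} (see also~\cite{dulo}) and then used as a black box in the proof of Theorem~\ref{tprin}. So there is no ``paper's own proof'' to compare against; what you have written is a self-contained sketch of the result that the paper merely quotes.

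That said, your sketch is essentially the standard route and is correct as far as it goes. A couple of remarks. First, your exclusion of Case~3 by the exponential asymptotics is valid, but there is a cheaper argument available directly from the algorithm's necessary conditions: here $\circ(r_\infty)=0<2$, so the hypothesis $(\infty)$ in Case~3, Step~1 (which requires $\circ(r_\infty)\ge 2$) is violated outright. Second, your Case~1 degree count is exactly right: with $\alpha_0^\pm=\tfrac12\pm\mu$ and $\alpha_\infty^\pm=\mp\kappa$ (from $(\infty_3)$ with $v=0$, $a=\tfrac12$, $b=-\kappa$), the condition $n=\alpha_\infty^{\varepsilon(\infty)}-\alpha_0^{\varepsilon(0)}\in\mathbb{Z}_{\ge 0}$ is precisely $-\varepsilon_1\kappa-\varepsilon_2\mu\in\tfrac12+\mathbb{N}$, and the resulting polynomial recursion does terminate (Laguerre). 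Third, you are right that the genuine content lies in excluding Case~2: the crude necessary conditions do \emph{not} rule it out (the order-two pole at $0$ and $\circ(r_\infty)=0$ are compatible with Case~2, and for special values of~$\mu$ the set~$D$ in Step~2 can be nonempty), so one really needs the Stokes/monodromy analysis of the rank-one irregular singularity at~$\infty$, which is what~\cite{Martinet-Ramis} and~\cite{dulo} supply. Your proposal correctly flags this as the nontrivial ingredient rather than pretending it is routine.
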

The \textit{Bessel's equation} is a particular case of the confluent hypergeometric equation and is given by
\begin{gather}\label{bessel} \partial_x^2y+{1\over x}\partial_xy+{x^2-n^2\over x^2}y=0.\end{gather}
Under a suitable transformation, the reduced form of the Bessel's equation is a particular case of the Whittaker's equation. Thus, we can obtain the following well known result, see \cite[p.~417]{kol} and see also \cite{kov86,mo}:

\begin{Corollary}\label{corbessel} The Bessel's differential equation \eqref{bessel} is integrable $($solvable by quadratures$)$ if and only if $n\in \frac12+\mathbb{Z}$.
\end{Corollary}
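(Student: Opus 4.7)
The strategy is to reduce Bessel's equation to Whittaker's equation and then invoke Theorem~\ref{thmarram}. The first step is to apply the transformation $z = x^{-1/2}y$ from \eqref{redsec}, with $a = 1/x$ and $b = (x^2-n^2)/x^2$, which kills the first-order term in \eqref{bessel} and yields the reduced equation
\begin{gather*}
y'' = \left(\frac{n^2 - \tfrac14}{x^2} - 1\right)y.
\end{gather*}

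Next I would perform the change of independent variable $t = 2ix$, in order to turn the constant $-1$ into the $1/4$ that is characteristic of Whittaker form. Since $\partial_x = 2i\,\partial_t$ and $1/x^2 = -4/t^2$, a direct substitution gives
\begin{gather*}
\partial_t^2 y = \left(\frac14 + \frac{n^2 - \tfrac14}{t^2}\right)y,
\end{gather*}
which is exactly \eqref{whittaker} with $\kappa = 0$ and $\mu = n$ (note $4\mu^2 - 1 = 4n^2 - 1$). Then I would feed this into Theorem~\ref{thmarram}: with $\kappa = 0$ the four Martinet--Ramis conditions collapse to $\pm n \in \frac12 + \mathbb{N}$, which is equivalent to the single condition $n \in \frac12 + \mathbb{Z}$. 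The ``if'' direction can additionally be checked by exhibiting the classical elementary closed forms for $J_{n+1/2}$, e.g., $J_{1/2}(x) = \sqrt{2/(\pi x)}\,\sin x$ and its recursive companions.

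The one point requiring care — and the most delicate step of the argument — is to verify that the two transformations above preserve the connected identity component $G^0$ of the differential Galois group, so that the integrability criterion really transfers from Whittaker back to Bessel. The gauge factor $x^{-1/2}$ generates only an algebraic degree-$2$ extension of the base field $\mathbb{C}(x)$, so the associated change of variable is at worst virtually isogaloisian; the affine change $t = 2ix$ is strongly isogaloisian because it is a $\mathbb{C}$-linear reparametrization of the independent variable. In both cases $G^0$ is unchanged, as recalled in the subsection on differential Galois theory, so the Martinet--Ramis characterization transfers verbatim and gives precisely the stated equivalence $n \in \frac12 + \mathbb{Z}$.
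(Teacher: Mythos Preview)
Your proposal is correct and follows exactly the approach the paper indicates: the paper states only that ``under a suitable transformation, the reduced form of the Bessel's equation is a particular case of the Whittaker's equation'' and then cites \cite{kol,kov86,mo}, whereas you have supplied the explicit details of that transformation (the gauge $z=x^{-1/2}y$ followed by $t=2ix$, yielding Whittaker parameters $\kappa=0$, $\mu=n$) and the isogaloisian justification. Nothing further is needed.
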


\begin{Definition}[Hamiltonian change of variable, \cite{acbl}]\label{def2} A change of variable $z=z(x)$ is called \textit{Hamiltonian} if $(z(x),\partial_xz(x))$ is a solution curve of the autonomous classical one degree of freedom Hamiltonian system
\begin{gather*} \partial_xz=\partial_wH,\qquad \partial_xw=-\partial_zH \qquad \textrm{with}\quad H=H(z,w)={w^2\over 2}+V(z),\end{gather*}
for some $V\in K$.
\end{Definition}

\begin{Proposition}[Hamiltonian algebrization, \cite{acbl}]\label{pr2} The differential equation
\begin{gather*}\partial_x^2{r}=q(x)r\end{gather*}
is algebrizable through a Hamiltonian change of variable $z=z(x)$ if and only if there exist~$f$,~$\alpha$ such that
\begin{gather*}{\partial_z\alpha\over\alpha},\qquad {f\over \alpha}\in \mathbb{C}(z), \qquad \text{where} \qquad f(z(x))=q(x),\qquad \alpha(z)=2(H-V(z))=(\partial_xz)^2.\end{gather*}
Furthermore, the algebraic form of the equation $\partial_x^2{r}=q(x)r$ is
\begin{gather*}%\label{eq4}
\partial_z^2y+{1\over2}{\partial_z\alpha\over \alpha}\partial_zy-{f\over\alpha}y=0,\qquad r(x)=y(z(x)).
\end{gather*}
\end{Proposition}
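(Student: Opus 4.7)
The plan is to verify the proposition by a direct chain-rule computation, after which the rationality conditions that characterize when the transformed equation has coefficients in $\mathbb{C}(z)$ can simply be read off. The Hamiltonian hypothesis is exactly what forces $\partial_x^2 z$ to be a function of $z$ alone; once that is in place, the rest is bookkeeping.

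First I would set $r(x)=y(z(x))$ and apply the chain rule twice to get $\partial_x r=(\partial_z y)(\partial_x z)$ and $\partial_x^2 r=(\partial_z^2 y)(\partial_x z)^2+(\partial_z y)(\partial_x^2 z)$. The Hamiltonian condition $(\partial_x z)^2=\alpha(z)$ replaces the first square; differentiating this identity with respect to $x$ and cancelling the common factor $\partial_x z$ yields $\partial_x^2 z=\frac{1}{2}\partial_z\alpha$, so the second derivative also depends on $z$ alone. Substituting into $\partial_x^2 r=q(x)r$ and setting $q(x)=f(z(x))$ gives $\alpha\,\partial_z^2 y+\frac{1}{2}(\partial_z\alpha)\,\partial_z y=f\,y$, and dividing through by $\alpha$ produces the claimed algebraic form.

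The equivalence is then immediate. The transformed equation is a second-order linear ODE in $z$ with coefficients $\frac{1}{2}\partial_z\alpha/\alpha$ and $-f/\alpha$, so algebrizability, i.e., having coefficients in $\mathbb{C}(z)$, is equivalent to both ratios lying in $\mathbb{C}(z)$. For the converse direction, given such $\alpha$ and $f$ one fixes a value $H$, sets $V(z):=H-\alpha(z)/2$, and lets $z(x)$ be a branch of the solution curve of the Hamiltonian system defined by $w^2/2+V(z)$ at energy $H$; by construction $(\partial_x z)^2=2(H-V(z))=\alpha(z)$, so the change of variable is Hamiltonian in the sense of the preceding definition and the earlier calculation returns the desired equation.

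The only point needing care is that the identification $f(z(x))=q(x)$ be consistent with a well-defined rational $f$; this is built into the hypothesis $f/\alpha\in\mathbb{C}(z)$, which forces $q(x)$ to be the pullback of a rational function of $z$ along the Hamiltonian change. No deeper Galois-theoretic input is required at this stage; the isogaloisian character of the transformation, which is what makes algebrization useful for the Galois applications in the paper, is a separate matter treated in the cited references.
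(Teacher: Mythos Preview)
Your argument is correct. The chain-rule computation you carry out is exactly the standard derivation: from $(\partial_x z)^2=\alpha(z)$ one differentiates to obtain $\partial_x^2 z=\tfrac12\partial_z\alpha$, substitutes into $\partial_x^2 r=(\partial_x z)^2\partial_z^2 y+(\partial_x^2 z)\partial_z y$, and divides by $\alpha$; the rationality of the two displayed ratios is then manifestly equivalent to the transformed equation having coefficients in $\mathbb{C}(z)$. Your handling of the converse---reconstructing $V$ from $\alpha$ and integrating the associated Hamiltonian flow---is also the right way to close the equivalence.

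Note, however, that the paper itself does not supply a proof of this proposition: it is quoted from the reference~\cite{acbl} (see also \cite{acthesis,acbook,acmowe}) and stated without argument in the preliminaries. So there is no ``paper's own proof'' to compare against here. That said, the proof in the cited sources proceeds by precisely the same direct change-of-variable calculation you give, so your write-up matches the original approach.
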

Next, we follow the references \cite{acthesis, acbl,acmowe} to describe Kovacic's algorithm. Thus, to solve second-order differential equations with rational coefficients we use should Kovacic's algorithm, which is presented in Appendix~\ref{app1}.

\section{Main results}\label{section3}

In this section we present the main contributions of this paper. First we will show that for the usual $\nu =6$, $\delta =12$ Lennard-Jones potential, the Schr\"{o}dinger equation is non-integrable in the sense of differential Galois theory for any value of energy. In contrast for $\delta=10$ and $\nu=6$ we show the integrability, in the sense of differential Galois theory, as a special case of a general theorem for $\delta =2\nu -2$ with $\delta ,\nu \in \mathbb{N}$ (see Theorem~\ref{tprin} and its subsequent remark). From the physical point of view, the $10-6$ case is of the most remarkable importance. Since we preserve the physically grounded $-{1/r}^{6}$ term coming from dipole-dipole interactions and responsible of the van der Waals forces; but we replace never the less, the rather arbitrary ${1/r}^{12}$ term responsible for the repulsion of the particles in the many body system, and leading to a~non-integrable differential equation; with an equally arbitrary ${1/r}^{10}$ term, \textit{but leading to an integrable one}. We will dedicate the subsequent sections to show the advantages and physical interest of this special choice (see Fig.~\ref{Fig2} for a graphic comparison of both potentials).

\begin{figure}[t]\centering
\includegraphics[width=80mm]{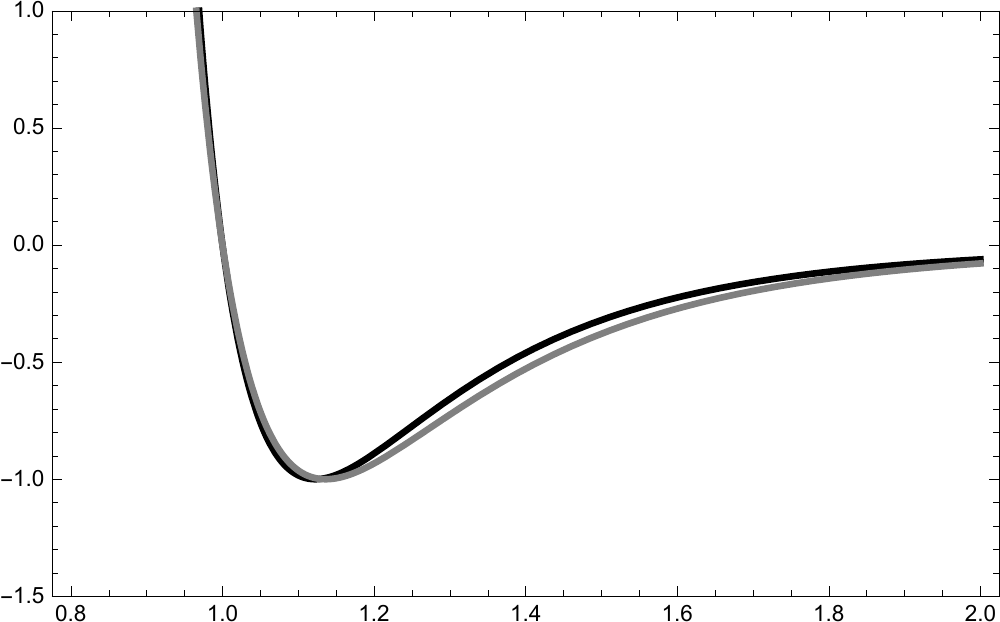}
\caption{$\frac{v(r)}{\epsilon}$ vs.~$\frac{r}{\sigma}$ plot for the rescaled $12-6$ (black) and $10-6$ (grey) Lennard-Jones potentials given in equation~\eqref{Gral LJ Resc} for the same molecular parameters~$\sigma$,~$\epsilon$ and $C=0$.}\label{Fig2}
\end{figure}
We start by considering the radial Schr\"{o}dinger equation (\ref{Con H}) with the generalized effective Lennard-Jones potential (\ref{Gral LJ Resc})
\begin{gather}
\mathcal{H}u_{k,l}(r) = \varepsilon _{k,l}u_{k,l}(r) ,\qquad
\mathcal{H} \equiv -\frac{{\rm d}^{2}}{{\rm d}r^{2}}+v_{\delta -\nu }^{\rm eff}(r)=-\frac{{\rm d}^{2}}{{\rm d}r^{2}}-{\frac{\bar{A}}{r^{\nu }}}+{\frac{\bar{B}}{r^{\delta }}}+\frac{\bar{C}}{r^{2}}, \nonumber \\
0 < \nu <\delta \in \mathbb{N\subset Z},\qquad \bar{A}>0,\qquad \bar{B}>0,\qquad \bar{C}\geq 0. \label{Con H 2}
\end{gather}
Setting $\mathbb{C}(r)$ as the differential field of equation (\ref{Con H 2}) with the derivative $\frac{{\rm d}}{{\rm d}r}$, we set also \smash{$\bar{A},\bar{B},\bar{C}\in \mathbb{C}$}.

\begin{Theorem}\label{theo12-6}Schr\"odinger equation with original $12-6$ Lennard-Jones effective potential is not integrable in the sense of differential Galois theory for any value of the energy and for all $A, B\in \mathbb{C}^*$, $C \in\mathbb{C}$.
\end{Theorem}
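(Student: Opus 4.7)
The plan is to apply Kovacic's algorithm directly to the reduced radial Schrödinger equation written in the form $u'' = q(r)\, u$ with
$$q(r) = \frac{\bar B}{r^{12}} - \frac{\bar A}{r^6} + \frac{\bar C}{r^2} - \varepsilon \in \mathbb{C}(r),$$
and to verify that each of the four cases of the algorithm fails. By the criterion recalled in the preliminaries, this forces the connected component $G^0$ of the differential Galois group to be non-solvable, so the equation is non-integrable in the sense of DGT for every $\varepsilon \in \mathbb{C}$ and every admissible triple $(\bar A, \bar B, \bar C)$.

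The singular structure of $q$ is rigid. Since $\bar B \neq 0$, the unique finite pole of $q$ sits at $r = 0$ and has order $12$; at infinity the order $o(\infty) = \deg(\mathrm{den}) - \deg(\mathrm{num})$ equals $0$, $2$, or $6$ according to whether $\varepsilon \neq 0$, $\varepsilon = 0 \neq \bar C$, or $\varepsilon = \bar C = 0$. This kills Kovacic's Cases $2$ and $3$ outright: Case $2$ demands every finite pole to be of order $2$ or odd $\geq 3$, and Case $3$ demands every finite pole to be of order at most $2$; both are violated by the order-$12$ pole at the origin. Only Case $1$ remains.

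For Case $1$, the local data at $r = 0$ are read off from
$$\sqrt{q} = \frac{\sqrt{\bar B}}{r^6} - \frac{\bar A}{2\sqrt{\bar B}} + O(r^4),$$
which gives $[\sqrt{q}]_0 = \sqrt{\bar B}/r^6$ and, because the coefficient of $r^{-7}$ in $q - ([\sqrt{q}]_0)^2 = -\bar A/r^6 + \bar C/r^2 - \varepsilon$ vanishes, $\alpha_0^\pm = k/2 = 3$. The analysis at infinity splits into three regimes: if $\varepsilon \neq 0$ then $\sqrt{q}$ has leading constant $\sqrt{-\varepsilon}$, so $[\sqrt{q}]_\infty = \sqrt{-\varepsilon}$ and $\alpha_\infty^\pm = 0$, whence every candidate degree $n = \alpha_\infty^{s_\infty} - \alpha_0^{s_0} = -3$ is negative and Case $1$ is already excluded; if $\varepsilon = 0 \neq \bar C$, infinity is a regular singular point with $\alpha_\infty^\pm = (1 \pm \sqrt{1+4\bar C})/2$, so $n \in \mathbb{Z}_{\geq 0}$ only for the discrete values $\bar C = m(m+1)$ with integer $m \geq 2$ (giving $n = m-2$); if $\varepsilon = \bar C = 0$, infinity has $o(\infty) = 6$ and is handled by the same machinery as $r = 0$, again producing only a finite list of candidate degrees.

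The main obstacle is then the polynomial-existence check in the two regimes where the integer-$n$ test is passed. For each such $n$ and each sign choice, one sets $\omega = s_\infty [\sqrt{q}]_\infty + s_0[\sqrt{q}]_0 + \alpha_0^{s_0}/r$ and tests whether Kovacic's associated equation
$$P'' + 2\omega P' + (\omega'+\omega^2-q)\, P = 0$$
admits a monic polynomial solution of degree $n$. In the regime $\varepsilon = 0 \neq \bar C$ a short computation yields $\omega' + \omega^2 - q = \bar A/r^6 + (6 - \bar C)/r^2$, and matching the coefficient of $r^{-5}$ in the above equation at $r = 0$ forces $\bar A = 0$, contradicting $\bar A \in \mathbb{C}^*$; the same obstruction is produced in the regime $\varepsilon = \bar C = 0$ once $[\sqrt{q}]_\infty$ is taken into account. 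Because no sign choice and no candidate $n$ yields a valid $P$, Case $1$ is also excluded and the theorem follows. The hardest step is the systematic coefficient comparison in Case $1$: one must handle every sign combination and every admissible $n$ and organize the expansion so that the $\bar A/r^6$ term of $q$ produces an obstruction that cannot be absorbed by $P''$, $2\omega P'$ or the remaining part of $(\omega' + \omega^2 - q) P$.
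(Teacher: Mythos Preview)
Your approach is genuinely different from the paper's, and the difference matters. The paper first performs the Hamiltonian change of variable $z=r^{2}$ and the subsequent reduction to the form $\Phi''=R(z)\Phi$; this converts the order-$12$ pole at $r=0$ into an order-$7$ pole at $z=0$. An \emph{odd} pole of order $>2$ is not covered by any of the subcases $(c_0)$--$(c_3)$ in Case~1, nor by $(c_1)$--$(c_2)$ in Case~3, so both cases die at Step~1 with no further work. Only Case~2 survives, and there the single set $E_0=\{7\}$ makes the parity/degree count trivial. By contrast, you keep the even-order pole, which means Case~1 is \emph{not} excluded by the singular data alone and you are forced into the polynomial-existence check of Step~3. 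That is precisely the labor the paper's substitution was designed to avoid.

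There is a genuine gap in your Case~1 analysis for $\varepsilon=0$. Your computation $\omega'+\omega^{2}-q=\bar A/r^{6}+(6-\bar C)/r^{2}$ is correct, but the assertion that ``matching the coefficient of $r^{-5}$ forces $\bar A=0$'' is not. Writing $P=\sum_{k\ge 0}p_{k}r^{k}$, the $r^{-5}$ coefficient of $P''+2\omega P'+(\omega'+\omega^{2}-q)P$ is $4s_{0}\sqrt{\bar B}\,p_{2}+\bar A\,p_{1}$, which only relates $p_{1}$ and $p_{2}$; it gives $\bar A=0$ outright only when $n\le 1$. For larger $n=m-2$ (with $\bar C=m(m+1)$, $m\ge 2$) one must run the recursion from the top, obtaining $p_{n-1}=p_{n-2}=p_{n-3}=0$ and nontrivial values for $p_{n-4},p_{n-5},\dots$, and then confront these with the bottom-up constraints coming from $r^{-6},r^{-5},\dots$; the contradiction appears, but not where you say it does, and the bookkeeping is substantially heavier than ``a short computation''. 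Your own last paragraph concedes this, but the argument is not actually carried out. (Two smaller points: the necessary condition for Case~2 is that $r$ have \emph{at least one} pole of order $2$ or odd order $>2$, not that every pole be of that type --- your conclusion survives because there is only one pole; and for $\varepsilon=\bar C=0$ one has $\circ(r_\infty)=6>2$, so subcase $(\infty_1)$ applies with $\alpha_\infty^{+}=0$, $\alpha_\infty^{-}=1$, giving $n\in\{-3,-2\}$ and $D=\varnothing$ directly, which is simpler than the treatment you sketch.) The clean fix is exactly the paper's: pass to $z=r^{2}$ first, so that Case~1 never has to be opened.
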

\begin{proof} Considering $\nu=6$ and $\delta=12$ in equation \eqref{Con H 2} we arrive to the Schr\"odinger equation with effective original $12-6$ Lennard-Jones potential. Now, applying the Hamiltonian change of variable $z=r^{2}$ over such Schr\"{o}dinger equation we arrive to the differential equation
\begin{gather*}
u''_{k,l} +\frac{1}{2z} u'_{k,l}+\left( \frac{A}{4z^{4}}-\frac{B}{4z^{7}}-\frac{C}{4z^{2}}
+\frac{\varepsilon _{k,l} }{4z}\right)u_{k,l}=0.\end{gather*} Now, the change of dependent variable
\begin{gather*}u_{k,l}={\Phi_{k,l} \over \sqrt[4]{z}}\end{gather*} leads to the differential equation
\begin{gather}
\Phi_{k,l}''=\left( {-4\varepsilon_{k,l}z^6+(4C-3)z^5-4Az^3+4B\over 16z^7} \right) \Phi_{k,l}. \label{Fi sec}
\end{gather}
After applying Kovacic's algorithm, see Appendix~\ref{app1}, we observe that equation \eqref{Fi sec} falls in case~4 for any $\varepsilon_{k,l} \in \mathbb{C}$ because there are not suitable conditions in step~1 for case~1 and case~3. The second step is not satisfied in case 2 because $D=\varnothing$ due to $E_{0}= \{7 \}$, $E_{\infty}= \{1,2 \}$ and there are not integers satisfying the condition for $D\neq \varnothing$. Thus we conclude that Schr\"odinger equation with original $12-6$ Lennard-Jones effective potential is not integrable in the sense of differential Galois theory for any value of the energy.
 \end{proof}

\subsection*{Supersymmetric quantum mechanics and the Lennard-Jones superpotential}
The implementation of Hamiltonian algebrization and Kovacic's algorithm reaches a considerable power in the realm of supersymmetric quantum mechanics. In fact the integrability of second-order linear equations like the radial Schr\"{o}dinger equation~(\ref{Con H 2}) subject of our study, via the Kovacic's algorithm is deeply related with the properties of the solutions of the associated Riccati equation in the supersymmetric extension of the theory~\cite{acthesis}. Taking this as a motivation, we go further in this section and propose a superpotential leading to the non-effective part~(\ref{LJ Resc 1}) of~(\ref{Gral LJ Resc}) ($C=0$) as one of two partner potentials. If we denote the superpotential in one dimension as~$w(r) $ the corresponding partner potentials are given by equation~(\ref{Partner Pots}) (see \cite{acthesis,Gango,witten}, among others)
\begin{gather}
v_{\pm }(r) \equiv w^{2}(r) \pm \frac{{\rm d}w}{{\rm d}r}\label{Def Pots}
\end{gather}
corresponding for each case to a Riccati equation for $w$. Knowing that $0<\nu <\delta $ we identify terms in~(\ref{Def Pots}) with terms in~(\ref{LJ Resc 1}) as follows
\begin{gather*}
w^{2}(r) ={\frac{\bar{B}}{r^{\delta }},\qquad \frac{{\rm d}w}{{\rm d}r}=\frac{\bar{A}}{r^{\nu }}} %\label{identif}
\end{gather*}
as a consequence we have
\begin{gather*}
w(r) =\pm {\frac{\sqrt{\bar{B}}}{r^{\frac{\delta }{2}}},\qquad w(r) =-\frac{\bar{A}}{(\nu -1) r^{\nu -1}}+C}.%\label{dos Ws}
\end{gather*}
A simple choice for $w(r) $ is given by%
\begin{gather}
w(r) \equiv -{\frac{\sqrt{\bar{B}}}{r^{\frac{\delta }{2}}}},\label{simple choice}
\end{gather}%
where the following identities should hold
\begin{gather}
\sqrt{\bar{B}}=\bar{A}/(\nu -1),\qquad \delta =2(\nu -1),\qquad C=0. \label{Conds Superpot}
\end{gather}
As a result we identify $v_{\delta -\nu }(r)$ in~(\ref{LJ Resc 1}) with $v_{-}$ and we have from~(\ref{Def Pots}), the following expressions for the partner potentials
\begin{gather}
v_{-}={\frac{\bar{B}}{r^{\delta }}-\frac{\bar{A}}{r^{\nu }}=v_{\delta -\nu}(r)},\qquad v_{+}={\frac{\bar{B}}{r^{\delta }}+\frac{\bar{A}}{r^{\nu }}}.\label{P Pots D-Nu}
\end{gather}
According to equation~\eqref{groundfunction}, the corresponding wave function for the zero energy level using~$v_{-}$ is given by
\begin{gather}\label{gsfunction}\psi_{0}(r) ={\rm e}^{-\frac{2\sqrt{\bar{B}}}{(\delta-2)r^{\frac{\delta-2}{2}}}}={\rm e}^{-\frac{A}{(\nu-1)(\nu -2)r^{\nu-2}}+Cr}.\end{gather}

\begin{figure}[t]\centering
\includegraphics[width=60mm]{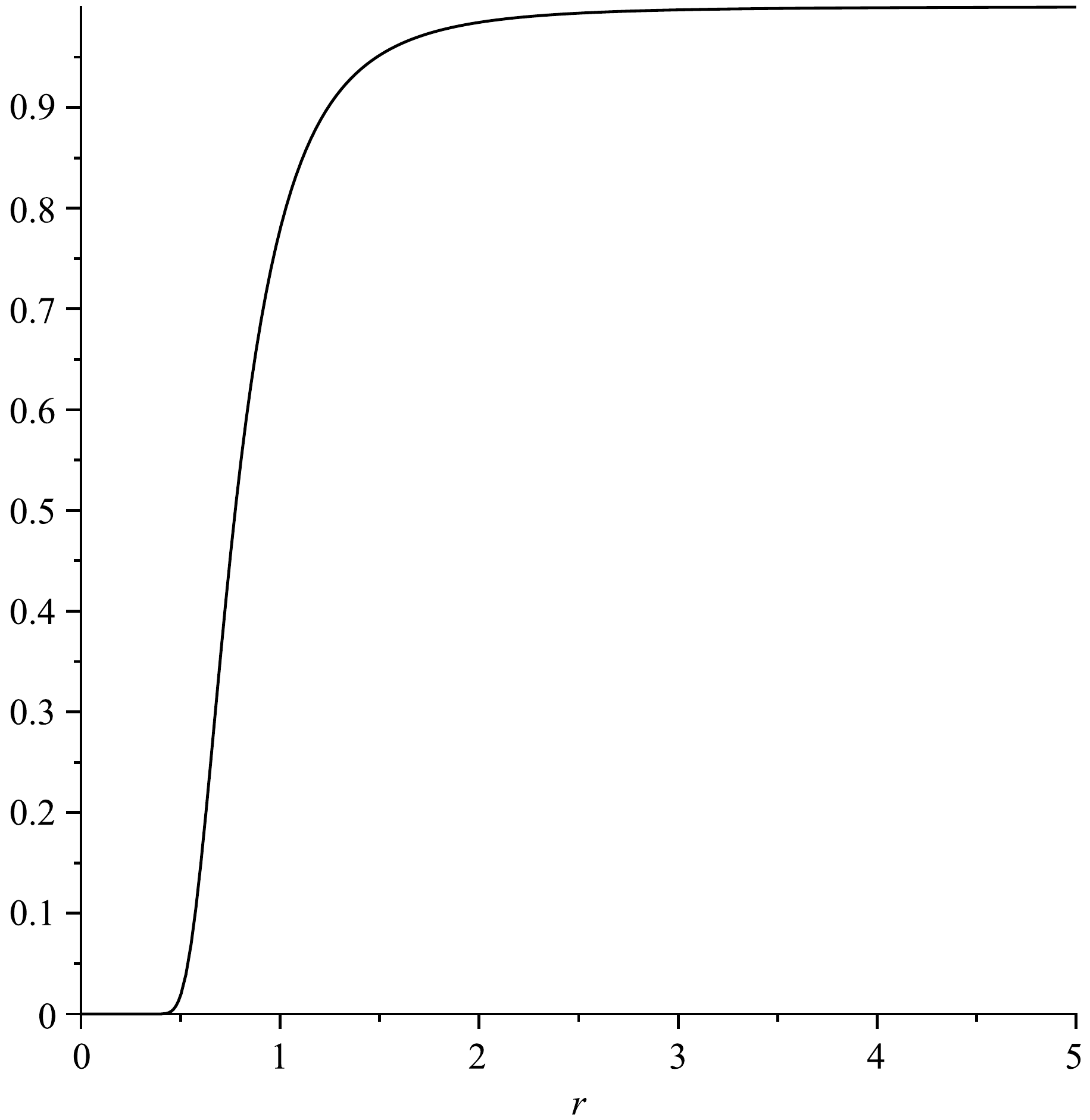}
\caption{Wave function for $v_{10-6}$ with zero energy, $\bar{A}=5$, $\bar{B}=1$ and $C=0$.}\label{Fig3}\end{figure}

An example of wave function for this potential is given in Fig.~\ref{Fig3}. Summarizing we conclude that expressions in (\ref{Conds Superpot}) set conditions for the existence of a~superpotential in the form of~(\ref{simple choice}), and a supersymmetric extension to equation~(\ref{Con H 2}) with partner potentials in~(\ref{P Pots D-Nu}). The case for $\delta =2(\nu -1) $ thus appears in a natural way, as a simple condition for defining a supersymmetric model. The property of integrability for zero energy of this case to be proven in Theorem~\ref{tprin}, makes it an appealing model to further explore the relation between supersymmetry and integrability already studied in~\cite{acthesis}.

 \subsection*{The $\boldsymbol{10-6}$ Lennard-Jones superpotential and the De Boer parameter}
As mentioned before the case for $\nu =6$, $\delta =10$ is of particular relevance from physical grounds. One of the aims of this work is to explore the analytical advantages of $v_{10-6}$ in contrast to $v_{12-6}$. The $10-6$ potential in terms of the molecular parameters $\sigma $ and $\epsilon $ is given by
\begin{gather}
V_{10-6}(r) =\alpha \epsilon \left[\left(\frac{\sigma}{r}\right) ^{10 }-\left( \frac{\sigma}{r}\right)^{6}\right], \label{n-6}
\end{gather}%
where $\alpha$ is chosen so that $\epsilon$ is the minimum energy (the well depth) and $\sigma$, as mentioned before, is the value where $V_{10-6}$ vanishes. As a result we have for this case\footnote{Notice in (\ref{12-6 Sigma-Eps}) that $\alpha\equiv 4$ for the $12-6$ potential.} $\alpha \equiv (25/6) \sqrt{5/3}$. Thus the rescaled $10-6$ Lennard-Jones potential reads%
\begin{gather}
v_{10-6}(r)\equiv \left( \frac{2\mu }{\hbar ^{2}}\right) V_{10-6}(r)= \left( \frac{2\mu \alpha \epsilon}{\hbar ^{2}} \right) \left[ \left( \frac{\sigma}{r}\right)^{10 }-\left(\frac{\sigma}{r}\right)^{6}
\right] \label{Rescaled n-6}
\end{gather}
or in the equivalent $A-B$ form, we have the following
\begin{gather}
v_{10-6}(r) =\frac{\bar{B}}{r^{10}}-\frac{\bar{A}}{r^{6}} \qquad \text{with}\qquad \bar{A}\equiv \frac{2\mu \alpha\epsilon\left(\sigma \right) ^{6}}{ \hbar ^{2}},\qquad
\text{and}\qquad \bar{B}\equiv \frac{2\mu \alpha \epsilon (\sigma) ^{10 }}{\hbar ^{2}}. \label{V N-6 A-B}
\end{gather}%
Clearly $v_{10-6}$ fulfills the condition in~(\ref{Conds Superpot}) for $\nu=6$; as a result the rescaled superpotential for $\delta =10$ in (\ref {simple choice}) takes the form
\begin{gather*}
w_{10-6}(r) =-{\frac{\sqrt{\bar{B}}}{r^{5}}}, %\label{Superpot 10-6}
\end{gather*}
where $\bar{A}$ $=5\sqrt{\bar{B}}$ as we easily check from (\ref{Conds Superpot}). Equivalently
\begin{gather}
w_{10-6}(r) =- \frac{\bar{A}}{5r^{5}}= - \frac{2\mu \alpha\epsilon (\sigma)^{6}}{5\hbar ^{2}r^{5}}= - \frac{5\mu \sqrt{5/3}\epsilon (\sigma) ^{6}}{3\hbar^{2}r^{5}}. \label{Superpot 2_1}
\end{gather}%
The condition $\bar{A}$ $=5\sqrt{\bar{B}}$ can be written in the following suggestive dimensionless form
\begin{gather}
\frac{\hbar ^{2}}{\mu \epsilon (\sigma) ^{2}}=\frac{1 }{3}\sqrt{\frac{5}{3}}\approx 0.4303, \label{Susy cond cero}
\end{gather}
where we have applied definitions in~(\ref{V N-6 A-B}) and $\alpha \equiv (25/6) \sqrt{5/3}$. We will call it from now on the supersymmetric condition (for short SUSY condition) for the $10-6$ Lennard-Jones potential. In terms of the so-called De Boer parameter $\Lambda \equiv \hbar /(\sigma \sqrt{\mu \epsilon })$, which gives a degree of the quantum character of the system~\cite{DeBoer48}, we have $\Lambda ^{2}\approx \allowbreak 0.4303$ or similarly $\Lambda \approx 0.6559$. An important remark at this point is that the SUSY condition given in the form $\bar{A}$ $=5\sqrt{\bar{B}}$ will appear again in Theorems~\ref{theodk} and~\ref{tprin}, in the context of the Martinet--Ramis theorem, that is, Theorem~\ref{thmarram}.

As a summary of this section, we conclude that the fulfillment of condition~(\ref{Susy cond cero}), guarantees not only the solvability of the Schr\"{o}dinger equation~(\ref{Con H l0}) with the potential~$v_{10-6}$ in~(\ref{Rescaled n-6}) (through the Martinet--Ramis theorem, as we will see) but also the existence of a superpotential given by expression~(\ref{Superpot 2_1}), which correspondingly leads to $v_{10-6}$ as one of the partner poten\-tials~$v_{\pm }(r)$ defined through the Riccati equations in~(\ref{Def Pots}). The supersymmetric model thus formulated considers a~specific version of the radial Schr\"{o}dinger equation~(\ref{Hl0 Non}) or equivalently the rescaled form~(\ref{Con H l0}), where we set in both equations $l=0$ for the angular momentum, and~$V_{10-6}$ and~$v_{10-6}$ are given in~(\ref{n-6}) and~(\ref{Rescaled n-6}). We will start in the next section, a physical analysis of the model, in the light of the De Boer principle of corresponding states.

\subsection*{The low temperature behavior of the $\boldsymbol{10-6}$ Lennard-Jones gas}

Recalling the discussion in the previous section, about the dimensionless representation~(\ref{Dim Less}) of the Schr\"{o}dinger equation we start by noticing that the $10-6$ potential~(\ref{n-6}) fulfills the condition $V(r)=\epsilon f(r/\sigma)$ if we identify~$f(r/\sigma)$ with $\alpha \big[ \big( \frac{\sigma}{r}\big) ^{10}-\big( \frac{\sigma}{r}\big)^{6}\big]$; as a result we have
\begin{gather}
\tilde{V}_{10-6}(r) \equiv \frac{V_{10-6}(r)}{\epsilon} = ( 25/6 )\sqrt{5/3}\left[ \left( \frac{\sigma}{r}\right) ^{10}-\left( \frac{\sigma}{r}\right)^{6}\right] \nonumber \\
\hphantom{\tilde{V}_{10-6}(r)}{} = (25/6)\sqrt{5/3}\left[ \left( \frac{1}{\tilde{r}}\right) ^{10}-\left( \frac{1}{\tilde{r}}\right) ^{6}\right],\label{Dim Less 10-6 Pot}
\end{gather}
where we have used the definitions in (\ref{De Boer Trafo}) $\tilde{r}\equiv \frac{r}{\sigma}$ and $\tilde{V}\equiv \frac{V}{\epsilon}$. The Schr\"{o}dinger equation takes thus the form in (\ref{Dim Less}):
\begin{gather}
\left[ -\frac{\Lambda ^{2}}{2}\frac{{\rm d}^{2}}{{\rm d}\tilde{r}^{2}}+\tilde{V}_{10-6}(r)\right] u(r) =\tilde{E}u(r). \label{Dim Less 10-6}
\end{gather}%
As mentioned in Section~\ref{section2}, the De Boer principle of corresponding states tells us that two different systems with equal value of $\Lambda$ have identical thermodynamical properties~\cite{DeBoer48}. In this sense the SUSY condition $\Lambda^{2}=\hbar^{2}/\big[ (\sigma)^{2}\mu \epsilon\big] = (1/3 )\sqrt{5/3}\approx 0.4303$ given in~(\ref{Susy cond cero}); and representing a definite set of combinations of values of the parameters $\sigma$, $\epsilon$, and $\mu $; that accounts for $\Lambda ^{2}=(1/3)\sqrt{5/3}$; is defining through the principle of corresponding states, a specific set of physical systems with equivalent thermodynamical properties.
These systems have the special feature of being described by a Supersymmetric potential of the form~(\ref{Superpot 2_1}) leading to~(\ref{Dim Less 10-6}) with the potential~(\ref{Dim Less 10-6 Pot}) as the partner potential
$V_{-}(r)\equiv \big(\frac{\hbar^{2}}{2\mu}\big)v_{-}(r)$ in (\ref{P Pots D-Nu}) with $\nu=6$.

We have found after a brief review of the literature, a significant coincidence between the specific value for $\Lambda^{2}\approx 0.4303$ and the value of $\Lambda^{2}=0.456$ reported by Miller, Nosanow and Parish~\cite{Miller} for a second-order liquid to gas phase transition of a Bose--Einstein condensate at zero temperature. Since their calculation is an approximate one, made in the framework of the variational method; it is a worthy task (to be done elsewhere) to investigate the advantages of our exact approach to the calculation of properties of such many-body systems at low temperatures in the context of the quantum extension to the principle of corresponding states.

\looseness=-1 In Fig.~\ref{Fig4} we see a plot of the second virial coefficient calculated numerically for both the $12-6$ and $10-6$ potentials from the integral definition in~(\ref{Virial B}). Relying on $B_2(r)$ as a~quantity that gives information of the microscopic pair-potential (with the previously mentioned limitations) we see an asymptotic closeness of both functions for low temperatures, that hints for the reliabi\-li\-ty of our supersymetric model with the $10-6$ Lennard-Jones potential, near the absolute zero.
 \begin{figure}[t]\centering
\includegraphics[width=80mm]{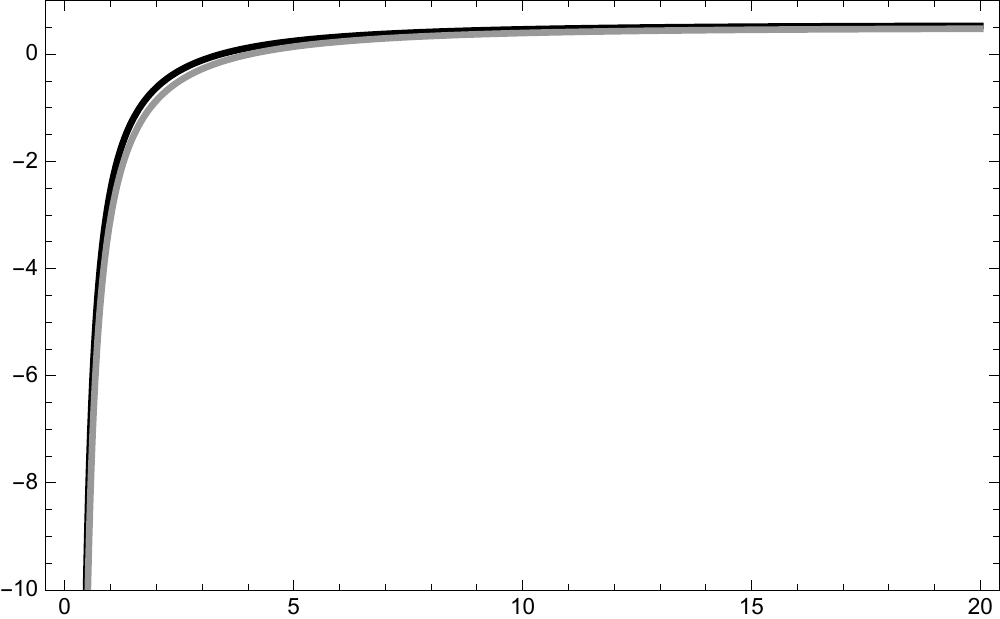}
\caption{$B_2(T)$ vs.\ $T/\epsilon$ plot for the $12-6$ (black) and $10-6$ (grey) Lennard-Jones potentials for the same molecular parameters~$\sigma$ and~$\epsilon$. The closeness of both functions for low temperatures near to absolute zero is a hint of the reliability of the $10-6$ potential in that region.}\label{Fig4}
\end{figure}

\vspace{-2mm}

\subsection*{Integrability of the $\boldsymbol{10-6}$ Lennard-Jones potential and its generalization}
The following result is valid for any potential $v(r)$ belonging to a differential field.
\begin{Theorem}\label{theodk} Consider $v(r)$ belonging to a differential field $K$, then the following statements hold
\begin{itemize}\itemsep=0pt
\item The only one change of dependent variable that allows to transform the radial equation of the Schr\"odinger equation into the Schr\"odinger equation with effective potential is $\varphi\colon u\mapsto \varphi(u)=ru$, where $u$ is the solution for the radial equation and $\varphi(u)$ is the solution for the Schr\"odinger equation with effective potential.
\item The differential Galois groups of the radial equation and Schr\"odinger equation with effective potential are subgroups of $\mathrm{SL}(2,\mathbb{C}) $.
\item The transformation $\varphi$ is strongly isogaloisian.
\end{itemize}
\end{Theorem}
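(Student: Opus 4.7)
The plan is to verify the three bullets in turn, using (i) standard reduction of a second-order linear ODE to a form without first-derivative term, (ii) Abel's identity for the Wronskian, and (iii) the fact that the multiplier $r$ already lies in the base field $K=\mathbb{C}(r)$.

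For the first bullet, I would start from the radial equation coming from separation of variables in \eqref{Ec schroed 1}, i.e., $u''+(2/r)u'+q(r)u=0$ with $q(r)=(2\mu/\hbar^2)(E-V(r))-l(l+1)/r^2$, and look for every change of dependent variable of the form $u\mapsto f(r)u$ that eliminates the first-derivative term. Expanding $(fu)''$ and substituting the ODE produces a coefficient $2f'-2f/r$ in front of $u'$; requiring it to vanish yields $f'/f=1/r$, hence $f(r)=cr$. Up to an irrelevant scalar this forces $\varphi(u)=ru$, which then satisfies the effective-potential equation by direct substitution, consistent with the identification between radial and effective wavefunctions used in the preliminaries.

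For the second bullet, the effective-potential equation is already in reduced form \eqref{redsec}, so Abel's identity says the Wronskian of any fundamental system is a nonzero constant; every $\sigma\in\mathrm{DGal}(L/K)$ fixes constants, and therefore $\det A_\sigma=\sigma(W)/W=1$, placing the Galois group inside $\mathrm{SL}(2,\mathbb{C})$. For the radial equation, Abel's identity instead gives $W'=-(2/r)W$, whence $W=c/r^2\in K$; again $\sigma(W)=W$ and $\det A_\sigma=1$. For the third bullet, since $r\in K$ the map $\varphi$ is a $K$-linear bijection between the two solution spaces: a fundamental system $\{u_1,u_2\}$ of the radial equation is carried to $\{ru_1,ru_2\}$ for the effective-potential equation, and the two sets generate the same differential extension of $K$. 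The two Picard--Vessiot extensions therefore coincide, so the two differential Galois groups are literally equal, which is exactly the notion of \emph{strongly isogaloisian} recalled in the cited references.

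The main obstacle I anticipate is interpretational rather than computational: the uniqueness claim in the first bullet only makes sense after one restricts to changes of dependent variable that multiply by an element of the base field $K$; for otherwise any invertible element of a Picard--Vessiot extension could serve as a conjugator and uniqueness would fail. Once this reading is fixed, each of the three claims reduces to a short explicit computation together with Abel's identity, and no deeper machinery from differential Galois theory (such as Kovacic's algorithm or the Martinet--Ramis theorem) needs to be invoked.
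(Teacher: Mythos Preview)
Your proposal is correct and follows essentially the same approach as the paper: bullet one is handled by the standard elimination of the first-derivative term (the paper simply invokes the general reduction \eqref{soldeq}--\eqref{redsec} with $a=2/r$, while you carry out the computation explicitly), bullet two by observing that both Wronskians lie in the base field (the paper states this without computing $W=c/r^{2}$), and bullet three by noting that multiplication by $r\in K$ leaves the Picard--Vessiot extension unchanged. Your remark that the uniqueness claim tacitly restricts to multipliers in $K$ is a fair clarification that the paper leaves implicit.
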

\begin{proof}We proceed according to each item.
\begin{itemize}\itemsep=-1.5pt
\item Applying the transformation given in equation \eqref{soldeq} and equation \eqref{redsec} we obtain it because $2/r$ is the coefficient of the first derivative of the radial equation after separation of variables. Thus, applying the change of variable $\varphi\colon u\mapsto \varphi(u)=ru$ we arrive to the Schr\"odinger equation with effective potential.
\item The Wronskian of two independent solutions of the Schr\"odinger equation with effective potential is constant and constants are in the base field. Similarly, the Wronskian of two independent solutions of the radial equation belongs to the base field. Therefore, automorphisms over such solutions acts by multiplication of matrices belonging to $\mathrm{SL}(2,\mathbb{C})$, that is, $\sigma(U)=A_\sigma U$, $\sigma(\varphi(U))=A_\sigma \varphi(U)$ and $\det (A_\sigma)=1$. Thus, $A_\sigma\in \mathrm{SL}(2,\mathbb{C})$.
\item Applying the differential automorphism $\sigma$ over $\varphi(u)$ we observe that $\sigma(\varphi(u))=\sigma(r)\sigma(u)=r\sigma(u)$, which implies that differential Galois group only depends on the solutions $u$ because~$r$ is in the base field and the differential Galois group will be the same with the same base field. Thus, the transformation $\varphi$ is strongly isogaloisian.
\end{itemize}
Thus we conclude the proof.
\end{proof}

The following result corresponds to the integrability conditions for the $10-6$ L-J potential and its generalization.
\begin{Theorem}\label{tprin}The Schr\"{o}dinger equation with $(2\nu-2)-\nu$ L-J potential, given in equation~\eqref{Con H 2}, is integrable for zero energy in the sense of differential Galois theory if and only if
\begin{gather*}%\label{bigtheo}
A=\pm\sqrt{B}\big(\pm\sqrt{1+4C}+\nu-2+2mk-4m\big),\qquad m\in \mathbb{Z}.
\end{gather*}
\end{Theorem}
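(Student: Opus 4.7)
The plan is to convert the zero-energy Schr\"odinger equation into Whittaker's equation~\eqref{whittaker} by a Hamiltonian algebrization followed by reduction to normal form and a rescaling, and then invoke the Martinet--Ramis criterion (Theorem~\ref{thmarram}). Setting $\varepsilon_{k,l}=0$ and $\delta=2\nu-2$ in~\eqref{Con H 2} and using Theorem~\ref{theodk} to pass to the effective form, one must analyze $u''=(-\bar{A}/r^{\nu}+\bar{B}/r^{2\nu-2}+\bar{C}/r^{2})u$. The change of variable $t=r^{-(\nu-2)}$ gives $\alpha(t)=(\partial_r t)^{2}=(\nu-2)^{2}t^{2(\nu-1)/(\nu-2)}$ with $\partial_t\alpha/\alpha=2(\nu-1)/((\nu-2)t)\in\mathbb{C}(t)$, and a direct check using the identities $\nu/(\nu-2)-2(\nu-1)/(\nu-2)=-1$, $(2\nu-2)/(\nu-2)-2(\nu-1)/(\nu-2)=0$ and $2/(\nu-2)-2(\nu-1)/(\nu-2)=-2$ shows that after division by $\alpha$ the three potential terms become $\bar{B}/(\nu-2)^{2}$, $-\bar{A}/((\nu-2)^{2}t)$ and $\bar{C}/((\nu-2)^{2}t^{2})$, all in $\mathbb{C}(t)$. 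Proposition~\ref{pr2} then produces the rational equation
\begin{gather*}
y_{tt}+\frac{\nu-1}{(\nu-2)t}\,y_{t}-\frac{1}{(\nu-2)^{2}}\left(\bar{B}-\frac{\bar{A}}{t}+\frac{\bar{C}}{t^{2}}\right)y=0.
\end{gather*}

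Next, I would eliminate the first-order term via the strongly isogaloisian substitution $y=t^{-(\nu-1)/(2(\nu-2))}\phi$; using the algebraic identity $(\nu-2)^{2}-(\nu-1)(\nu-3)=1$, the reduced equation becomes
\begin{gather*}
\phi''=\left(\frac{\bar{B}}{(\nu-2)^{2}}-\frac{\bar{A}}{(\nu-2)^{2}\,t}+\frac{4\bar{C}+1-(\nu-2)^{2}}{4(\nu-2)^{2}\,t^{2}}\right)\phi,
\end{gather*}
and the linear rescaling $x=2\sqrt{\bar{B}}\,t/(\nu-2)$ casts it into Whittaker's form~\eqref{whittaker} with
\begin{gather*}
\kappa=\frac{\bar{A}}{2(\nu-2)\sqrt{\bar{B}}},\qquad \mu=\frac{\sqrt{1+4\bar{C}}}{2(\nu-2)}.
\end{gather*}

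By Theorem~\ref{thmarram}, integrability is equivalent to $\pm\kappa\pm\mu=\tfrac12+m$ for some sign choice and $m\in\mathbb{N}$. Multiplying this relation by $2(\nu-2)$ and solving for $\bar{A}$ yields
\begin{gather*}
\bar{A}=\pm\sqrt{\bar{B}}\bigl(\pm\sqrt{1+4\bar{C}}+\nu-2+2m(\nu-2)\bigr),
\end{gather*}
which, once the four Martinet--Ramis sign cases are repackaged so that the integer index ranges over $\mathbb{Z}$ with an independent outer sign, is precisely the stated condition (with the constants $A$, $B$, $C$ in the statement identified with $\bar{A}$, $\bar{B}$, $\bar{C}$). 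The main technical obstacle is to verify that the fractional-power Hamiltonian algebrization $t=r^{-(\nu-2)}$, together with the subsequent normal-form reduction and the linear rescaling in $x$, is (virtually or strongly) isogaloisian in the sense of~\cite{acthesis,acmowe}, so that the connected identity component of the differential Galois group is preserved at every step; the only other delicate point is spotting the cancellation $(\nu-2)^{2}-(\nu-1)(\nu-3)=1$ that converts the messy coefficient of $t^{-2}$ into the clean $(1+4\bar{C})/(\nu-2)^{2}$ and produces the $\sqrt{1+4\bar{C}}$ of the final expression.
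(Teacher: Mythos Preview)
Your proposal is correct and follows essentially the same route as the paper. The paper performs the single combined substitution $u_{k,l}=\sqrt{r^{\nu-1}}\phi_{k,l}$, $r=\sqrt[\nu-2]{2\sqrt{B}/((\nu-2)z)}$, which is exactly your Hamiltonian algebrization $t=r^{-(\nu-2)}$ composed with the normal-form reduction $y=t^{-(\nu-1)/(2(\nu-2))}\phi$ and the rescaling $x=2\sqrt{\bar{B}}\,t/(\nu-2)$; it lands on the same Whittaker parameters $\kappa=\bar{A}/(2(\nu-2)\sqrt{\bar{B}})$, $\mu=\sqrt{1+4\bar{C}}/(2(\nu-2))$ and then invokes Theorem~\ref{thmarram} just as you do.
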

\begin{proof}The Schr\"{o}dinger equation given in equation \eqref{Con H 2}, with zero energy, is transformed into the Whittaker's differential equation \eqref{whittaker} through the change of variables
\begin{gather*}u_{k,l}=\sqrt{r^{\nu-1}}\phi_{k,l},\qquad r=\sqrt[\nu-2]{2\sqrt{B}\over (\nu-2)z}\end{gather*} with parameters \begin{gather*}\kappa={A\over \sqrt{B}(2\nu-4)}\qquad \text{and}\qquad \mu={\sqrt{1+4C}\over 2\nu-4}.\end{gather*} Applying Martinet--Ramis theorem we have
\begin{gather*}\pm{A\over \sqrt{B}(2\nu-4)}\pm {\sqrt{1+4C}\over 2\nu-4}\in \mathbb{Z}+\frac12.\end{gather*} Assuming $m\in\mathbb{Z}$ we obtain \begin{gather*}A=\pm\sqrt{B}\big(\pm\sqrt{1+4C}+\nu-2+2m\nu-4m\big),\qquad m\in \mathbb{Z},\end{gather*}
which is the integrability condition for the Schr\"{o}dinger equation with $(2\nu-2)-\nu$ L-J potential and its wave function corresponds to equation~\eqref{gsfunction} with $\delta=2\nu-2$.
\end{proof}

\begin{Remark}We observe that Theorem~\ref{tprin} refers to the integrability in the sense of differential Galois theory, which is not related with square integrable wave functions. Another key point is that we are not considering energies different than zero and excited states, this is an open problem for this generalized potential. In particular, the theorem includes the $10-6$ L-J potential, for $C=0$ and $\nu=6$. Therefore the Schr\"odinger equation with $L-J$ 10-6 is integrable for zero energy when $A=\pm\sqrt{B}(8m+4\pm 1)$, while energies different than zero and excited states were not considered in this paper. Moreover,
for zero energy and $m=-1$ we recover the integrability condition obtained through SUSYQM for this potential, i.e., the Schr\"{o}dinger equation with $10-6$ L-J potential is also integrable in the sense of differential Galois theory for $A\in\big\{\pm 3\sqrt{B},\pm 5\sqrt{B} \big\}$.
\end{Remark}

\section{Final remarks and open questions}\label{section4}

In this paper we have shown that there exist no explicit solutions of the radial Schr\"{o}dinger equation with the usual $12-6$ Lennard-Jones potential for any value of the energy. We have proposed an alternative supersymmetric model with a $10-6$, $v_{-}$ partner potential, that preserves the~$-1/r^6$ van der Waals attraction. We have found through the De Boer principle of corresponding states, initial hints that this model could represent a low temperature system determined by a $\Lambda^2 \approx 0.4303$ value of the 2nd power of the De Boer parameter. We have studied possible generalizations of the Lennard-Jones potential, where the Schr\"{o}dinger equation is integrable in the sense of differential Galois theory.

Further work can be developed looking for similar theorems of integrability in the sense of differential Galois theory for $E\neq 0$ and excited states, for the $10-6$ potential and other generalizations. Relations between square integrable wave functions and solutions in closed form of SE for generalizations of L-J potentials should be explored in further works too.

We hope that this paper can be the starting point of further works involving SUSYQM, DGT and statistical mechanics, which are not easy topics. Although we tried to write a readable preliminaries about these topics, we know that it was not enough and the reader should complement with references suggested by us, otherwise this paper could be a large paper, which was not the target.

\appendix
\section{Kovacic algorithm}\label{app1}
The version of Kovacic's algorithm presented in this appendix is based in the improved version given in~\cite{acbl}. There are four cases in Kovacic's algorithm. Only for cases 1, 2 and 3 we can solve the differential equation, but for the case~4 the differential equation is not integrable. It is possible that Kovacic's algorithm can provide us only one solution~($\zeta_1$), so that we can obtain the second solution ($\zeta_2$) through
\begin{gather*}%\label{second}
\zeta_2=\zeta_1\int\frac{{\rm d}x}{\zeta_1^2}.
\end{gather*}

{\bf Notations.} For the differential equation given by
\begin{gather*}\partial_x^2\zeta=r\zeta,\qquad r={s\over t},\qquad s,t\in \mathbb{C}[x],\end{gather*}
we use the following notations:
\begin{enumerate}\itemsep=0pt
\item[1)] denote by $\Gamma'$ be the set of (finite) poles of $r$, $\Gamma^{\prime}= \{ c\in\mathbb{C}\colon t(c)=0 \}$,
\item[2)] denote by $\Gamma=\Gamma^{\prime}\cup\{\infty\}$,
\item[3)] by the order of $r$ at $c\in \Gamma'$, $\circ(r_c)$, we mean the multiplicity of $c$ as a~pole of $r$,
\item[4)] by the order of $r$ at $\infty$, $\circ (r_{\infty})$, we mean the order of $\infty$ as a zero of~$r$. That is $\circ ( r_{\infty } )=\deg (t)-\deg (s)$.
\end{enumerate}

\subsection*{The four cases}

{\bf Case 1.} In this case $[ \sqrt{r}] _{c}$ and $[ \sqrt{r}] _{\infty}$ means the Laurent series of $\sqrt{r}$ at $c$ and the Laurent series of $\sqrt{r}$ at $\infty$ respectively. Furthermore, we define $\varepsilon(p)$ as follows: if $p\in\Gamma$, then $\varepsilon( p) \in\{+,-\}$. Finally, the complex numbers $\alpha_{c}^{+}$, $\alpha_{c}^{-}$, $\alpha_{\infty}^{+}$, $\alpha_{\infty}^{-}$ will be defined in the first step. If the differential equation has no poles it only can fall in this case.

{\bf Step 1.} Search for each $c \in \Gamma'$ and for $\infty$ the corresponding situation as follows:
\begin{enumerate}\itemsep=0pt
\item[$(c_{0})$] If $\circ(r_{c}) =0$, then \begin{gather*} [ \sqrt {r} ] _{c}=0,\qquad\alpha_{c}^{\pm}=0.\end{gather*}

\item[$(c_{1})$] If $\circ(r_{c}) =1$, then \begin{gather*} [ \sqrt {r} ] _{c}=0,\qquad\alpha_{c}^{\pm}=1.\end{gather*}

\item[$(c_{2})$] If $\circ(r_{c}) =2$, and \begin{gather*}r= \cdots
+ b(x-c)^{-2}+\cdots,\qquad \textrm{then}\qquad [ \sqrt {r} ]_{c}=0,\qquad \alpha_{c}^{\pm}=\frac{1\pm\sqrt{1+4b}}{2}.\end{gather*}

\item[$(c_{3})$] If $\circ(r_{c}) =2v\geq4$, and \begin{gather*}r=
\big(a(x-c) ^{-v}+\dots +d(x-c)^{-2}\big)^{2}+b(x-c)^{-(v+1)}+\cdots,\qquad \textrm{then}\\ [\sqrt {r}] _{c}=a(x-c) ^{-v}+\dots +d(x-c) ^{-2},\qquad\alpha_{c}^{\pm}=\frac{1}{2}\left(
\pm\frac{b}{a}+v\right).\end{gather*}

\item[$(\infty_{1})$] If $\circ ( r_{\infty} ) >2$, then
\begin{gather*}[\sqrt{r}] _{\infty}=0,\qquad\alpha_{\infty}^{+}=0,\qquad\alpha_{\infty}^{-}=1.\end{gather*}

\item[$(\infty_{2})$] If $\circ(r_{\infty}) =2$, and
$r= \cdots + bx^{2}+\cdots$, then \begin{gather*} [\sqrt{r}] _{\infty}=0,\qquad\alpha_{\infty}^{\pm}=\frac{1\pm\sqrt{1+4b}}{2}.\end{gather*}

\item[$(\infty_{3})$] If $\circ ( r_{\infty} ) =-2v\leq0$, and
\begin{gather*}r=\big( ax^{v}+\dots +d\big)^{2}+ bx^{v-1}+\cdots,\qquad \textrm{then}\\
 [\sqrt{r}] _{\infty}=ax^{v}+\dots +d,\qquad \text{and}\qquad \alpha_{\infty}^{\pm }=\frac{1}{2}\left(\pm\frac{b}{a}-v\right).\end{gather*}
\end{enumerate}

{\bf Step 2.} Find $D\neq\varnothing$ defined by
\begin{gather*}D=\left\{n\in\mathbb{Z}_{+}\colon n=\alpha_{\infty}^{\varepsilon
(\infty)}- \sum\limits_{c\in\Gamma^{\prime}}
\alpha_{c}^{\varepsilon(c)},\, \forall\, ( \varepsilon (p) ) _{p\in\Gamma}\right\} .\end{gather*} If $D=\varnothing$, then we should start with the case~2. Now, if $\mathrm{Card}(D)>0$, then for each $n\in D$ we search~$\omega$ $\in\mathbb{C}(x)$ such that
\begin{gather*}\omega=\varepsilon(\infty) [\sqrt{r}] _{\infty}+\sum\limits_{c\in\Gamma^{\prime}}
\big( \varepsilon(c) [\sqrt{r}] _{c}+{\alpha_{c}^{\varepsilon(c)}}{(x-c)^{-1}}\big).\end{gather*}

{\bf Step 3}. For each $n\in D$, search for a monic polynomial $P_n$ of degree $n$ with
\begin{gather*}%\label{recu1}
\partial_x^2P_n + 2\omega \partial_xP_n + \big(\partial_x\omega + \omega^2 - r\big) P_n = 0.
\end{gather*}
If success is achieved then $\zeta_1=P_n {\rm e}^{\int\omega}$ is a~solution of the differential equation. Else, case~1 cannot hold.

{\bf Case 2.} Search for each $c \in \Gamma'$ and for $\infty$ the corresponding situation as follows:

{\bf Step 1.} Search for each $c\in\Gamma^{\prime}$ and $\infty$ the sets $E_{c}\neq\varnothing$ and $E_{\infty}\neq\varnothing$. For each $c\in\Gamma^{\prime}$ and for $\infty$ we define $E_{c}\subset\mathbb{Z}$ and $E_{\infty}\subset\mathbb{Z}$ as follows:
\begin{enumerate}\itemsep=0pt
\item[($c_1$)] If $\circ(r_{c})=1$, then $E_{c}=\{4\}$.

\item[($c_2$)] If $\circ(r_{c}) =2$, and $r= \cdots + b(x-c)^{-2}+\cdots$, then $E_{c}= \big\{2+k\sqrt{1+4b}\colon k=0,\pm2\big\}$.

\item[($c_3$)] If $\circ(r_{c}) =v>2$, then $E_{c}=\{v\}$.

\item[$(\infty_{1})$] If $\circ(r_{\infty}) >2$, then $E_{\infty }=\{0,2,4\}$.

\item[$(\infty_{2})$] If $\circ(r_{\infty}) =2$, and $r= \cdots + bx^{2}+\cdots$, then $E_{\infty }= \big\{2+k\sqrt{1+4b}\colon k=0,\pm2 \big\}$.

\item[$(\infty_{3})$] If $\circ(r_{\infty}) =v<2$, then $E_{\infty }=\{v\}$.
\end{enumerate}

{\bf Step 2.} Find $D\neq\varnothing$ defined by
\begin{gather*}D=\left\{
n\in\mathbb{Z}_{+}\colon n=\frac{1}{2}\left( e_{\infty}- \sum\limits_{c\in\Gamma^{\prime}} e_{c}\right),\, \forall\, e_{p}\in E_{p},\, p\in\Gamma\right\}.\end{gather*} If
$D=\varnothing$, then we should start the case~3. Now, if $\mathrm{Card}(D)>0$, then for each $n\in D$ we search a rational function $\theta$ defined by
\begin{gather*}\theta=\frac{1}{2} \sum\limits_{c\in\Gamma^{\prime}} \frac{e_{c}}{x-c}.\end{gather*}

{\bf Step 3.} For each $n\in D$, search a monic polynomial $P_n$ of degree $n$, such that
\begin{gather*}%\label{recu2}
\partial_x^3P_n+3\theta \partial_x^2P_n+\big(3\partial_x\theta+3\theta ^{2}-4r\big)\partial_xP_n+\big( \partial_x^2\theta+3\theta\partial_x\theta +\theta^{3}-4r\theta-2\partial_xr\big)P_n=0.
\end{gather*}
If $P_n$ does not exist, then case 2 cannot hold. If such a polynomial is found, set $\phi = \theta + \partial_xP_n/P_n$ and let $\omega$ be a solution of{\samepage
\begin{gather*}\omega^2 + \phi \omega + {1\over2}\big(\partial_x\phi + \phi^2 -2r\big)= 0.\end{gather*}
Then $\zeta_1 = {\rm e}^{\int\omega}$ is a solution of the differential equation.}

{\bf Case 3.} Search for each $c \in \Gamma'$ and for $\infty$ the corresponding situation as follows:

{\bf Step 1.} Search for each $c\in\Gamma^{\prime}$ and $\infty$ the sets $E_{c}\neq\varnothing$ and $E_{\infty}\neq\varnothing$. For each $c\in\Gamma^{\prime}$ and for $\infty$ we define $E_{c}\subset\mathbb{Z}$ and $E_{\infty}\subset\mathbb{Z}$ as follows:
\begin{enumerate}\itemsep=0pt
\item[$(c_{1})$] If $\circ(r_{c}) =1$, then $E_{c}=\{12\}$.

\item[$(c_{2})$] If $\circ(r_{c}) =2$, and $r= \cdots +b(x-c)^{-2}+\cdots$, then
\begin{gather*}
E_{c}= \big\{ 6+k\sqrt{1+4b}\colon k=0,\pm1,\pm2,\pm3,\pm4,\pm5,\pm6 \big\}.
\end{gather*}

\item[$(\infty)$] If $\circ(r_{\infty}) =v\geq2$, and $r=
\cdots + bx^{2}+\cdots$, then \begin{gather*}E_{\infty }= \left\{ 6+{12k\over m}\sqrt{1+4b}\colon k=0,\pm1,\pm2,\pm3,\pm4,\pm5,\pm6\right\},\qquad
m\in\{4,6,12\}.\end{gather*}
\end{enumerate}

{\bf Step 2.} Find $D\neq\varnothing$ defined by
\begin{gather*}D=\left\{ n\in\mathbb{Z}_{+}\colon n=\frac{m}{12}\left(e_{\infty}- \sum\limits_{c\in\Gamma^{\prime}}e_{c}\right) ,\, \forall\, e_{p}\in E_{p},\, p\in\Gamma\right\}.\end{gather*}
In this case we start with $m=4$ to obtain the solution, afterwards $m=6$ and finally $m=12$. If $D=\varnothing$, then the differential equation is not integrable because it falls in the case~4. Now, if $\mathrm{Card}(D)>0$, then for each $n\in D$ with its respective~$m$, search a rational function
\begin{gather*}\theta={m\over 12} \sum\limits_{c\in\Gamma^{\prime}} \frac{e_{c}}{x-c}\end{gather*}
and a polynomial $S$ defined as \begin{gather*}S= \prod\limits_{c\in\Gamma^{\prime}} (x-c).\end{gather*}

{\bf Step 3}. Search for each $n\in D$, with its respective $m$, a~monic polynomial $P_n=P$ of degree~$n$, such that its coefficients can be determined recursively by
\begin{gather*} P_{-1}=0,\qquad P_{m}=-P,\\ P_{i-1}=-S\partial_xP_{i}- ( (m-i) \partial_xS-S\theta) P_{i}-(m-i) (i+1) S^{2}rP_{i+1},\end{gather*} where $i\in\{0,1,\dots,m-1,m\}$. If~$P$ does not exist, then the differential equation is not integrable because it falls in case~4. Now, if $P$ exists search~$\omega$ such that \begin{gather*} \sum\limits_{i=0}^{m}
\frac{S^{i}P}{(m-i) !}\omega^{i}=0,\end{gather*} then a solution
of the differential equation is given by \begin{gather*}\zeta={\rm e}^{\int \omega},\end{gather*} where $\omega$ is solution of the previous polynomial of degree~$m$.

\subsection*{Acknowledgements}
P.A.-H.~thanks to Universidad Sim\'on Bol\'{\i}var, Research Project \emph{M\'etodos Algebraicos y Combinatorios en Sistemas Din\'amicos y F\'{\i}sica Matem\'atica}. He also acknowledges and thanks the support of COLCIENCIAS through grant numbers FP44842-013-2018 of the Fondo Nacional de Financiamiento para la Ciencia, la Tecnolog\'{\i}a y la Innovaci\'on. E.T.~wishes to thank the German Service of Academic Exchange (DAAD) for financial support, and Professor M.~Reuter at the Institute of Physics in Uni-Mainz for stimulating discussions about this work.
Finally, the authors thank to the anonymous referees for their valuable comments and suggestions.

%\cite{Singer81}

\pdfbookmark[1]{References}{ref}
\LastPageEnding


\begin{thebibliography}{99}
\footnotesize\itemsep=0pt

\bibitem{acthesis}
Acosta-Hum\'{a}nez P.B., Galoisian approach to supersymmetric quantum
 mechanics, Ph.D.~Thesis, Universitat Polit\`{e}cnica de Catalunya, Barcelona,
 2009, \href{https://arxiv.org/abs/0906.3532}{arXiv:0906.3532}.

\bibitem{ac}
Acosta-Hum\'{a}nez P.B., Nonautonomous {H}amiltonian systems and
 {M}orales--{R}amis theory. {I}.~{T}he case {$\ddot x=f(x,t)$},
 \href{https://doi.org/10.1137/080730329}{\textit{SIAM~J. Appl. Dyn. Syst.}} \textbf{8} (2009), 279--297, \href{https://arxiv.org/abs/0808.3028}{arXiv:0808.3028}.

\bibitem{acbook}
Acosta-Hum\'{a}nez P.B., Galoisian approach to supersymmetric quantum
 mechanics. The integrability analysis of the Schr\"odinger equation by means
 of differential Galois theory, VDM Verlag, Dr.~M\"uller, Berlin, 2010.

\bibitem{aabd}
Acosta-Hum\'{a}nez P.B., Alvarez-Ram\'{i}rez M., Bl\'{a}zquez-Sanz D., Delgado
 J., Non-integrability criterium for normal variational equations around an
 integrable subsystem and an example: the {W}ilberforce spring-pendulum,
 \href{https://doi.org/10.3934/dcds.2013.33.965}{\textit{Discrete Contin. Dyn. Syst. Ser.~A}} \textbf{33} (2013), 965--986,
 \href{https://arxiv.org/abs/1104.0312}{arXiv:1104.0312}.

\bibitem{aad}
Acosta-Hum\'{a}nez P.B., \'{A}lvarez Ram\'{i}rez M., Delgado J.,
 Non-integrability of some few body problems in two degrees of freedom,
 \href{https://doi.org/10.1007/s12346-010-0008-7}{\textit{Qual. Theory Dyn. Syst.}} \textbf{8} (2009), 209--239, \href{https://arxiv.org/abs/0811.2638}{arXiv:0811.2638}.

\bibitem{acbl}
Acosta-Hum\'{a}nez P.B., Bl\'{a}zquez-Sanz D., Hamiltonian system and
 variational equations with polynomial coefficients, in Dynamic Systems and
 Applications, {V}ol.~5, Dynamic, Atlanta, GA, 2008, 6--10.


\bibitem{acbl2}
Acosta-Hum\'{a}nez P.B., Bl\'{a}zquez-Sanz D., Non-integrability of some
 {H}amiltonians with rational potentials, \href{https://doi.org/10.3934/dcdsb.2008.10.265}{\textit{Discrete Contin. Dyn. Syst.
 Ser.~B}} \textbf{10} (2008), 265--293, \href{https://arxiv.org/abs/math-ph/0610010}{math-ph/0610010}.

\bibitem{acblva}
Acosta-Hum\'{a}nez P.B., Blazquez-Sanz D., Vargas-Contreras C.A., On
 {H}amiltonian potentials with quartic polynomial normal variational
 equations, \textit{Nonlinear Stud.} \textbf{16} (2009), 299--313,
 \href{https://arxiv.org/abs/0809.0135}{arXiv:0809.0135}.

\bibitem{akmss}
Acosta-Hum\'anez P.B., Kryuchkov S.I., Suazo E., Suslov S.K., Degenerate
 parametric amplification of squeezed photons: explicit solutions, statistics,
 means and variance, \href{https://doi.org/10.1142/S0218863515500216}{\textit{J.~Nonlinear Opt. Phys. Mater.}} \textbf{24} (2015), 1550021, 27~pages, \href{https://arxiv.org/abs/1311.2479}{arXiv:1311.2479}.

\bibitem{almp}
Acosta-Hum\'{a}nez P.B., L\'{a}zaro J.T., Morales-Ruiz J.J., Pantazi C., On the
 integrability of polynomial vector fields in the plane by means of
 {P}icard--{V}essiot theory, \href{https://doi.org/10.3934/dcds.2015.35.1767}{\textit{Discrete Contin. Dyn. Syst. Ser.~A}}
 \textbf{35} (2015), 1767--1800.

\bibitem{acmowe}
Acosta-Hum\'{a}nez P.B., Morales~Ruiz J.J., Weil J.A., Galoisian approach to
 integrability of {S}chr\"{o}dinger equation, \href{https://doi.org/10.1016/S0034-4877(11)60019-0}{\textit{Rep. Math. Phys.}}
 \textbf{67} (2011), 305--374, \href{https://arxiv.org/abs/1008.3445}{arXiv:1008.3445}.

\bibitem{acpan}
Acosta-Hum\'{a}nez P.B., Pantazi C., Darboux integrals for {S}chr\"{o}dinger
 planar vector fields via {D}arboux transformations, \href{https://doi.org/10.3842/SIGMA.2012.043}{\textit{SIGMA}} \textbf{8}
 (2012), 043, 26~pages, \href{https://arxiv.org/abs/1111.0120}{arXiv:1111.0120}.

\bibitem{acsu2}
Acosta-Hum\'{a}nez P.B., Suazo E., Liouvillian propagators, {R}iccati equation
 and differential {G}alois theory, \href{https://doi.org/10.1088/1751-8113/46/45/455203}{\textit{J.~Phys.~A: Math. Theor.}}
 \textbf{46} (2013), 455203, 17~pages, \href{https://arxiv.org/abs/1304.5698}{arXiv:1304.5698}.

\bibitem{acsu}
Acosta-Hum\'{a}nez P.B., Suazo E., Liouvillian propagators and degenerate
 parametric amplification with time-dependent pump amplitude and phase, in
 Analysis, Modelling, Optimization, and Numerical Techniques, \href{https://doi.org/10.1007/978-3-319-12583-1_21}{\textit{Springer
 Proc. Math. Stat.}}, Vol.~121, Springer, Cham, 2015, 295--307.

\bibitem{braverman}
Braverman  A., Etingof P., Gaitsgory D., Quantum integrable systems and differential Galois theory, \href{https://doi.org/10.1007/BF01234630}{\textit{Transform. Groups}} \textbf{2} (1997),  31--56,
\href{https://arxiv.org/abs/alg-geom/9607012}{alg-geom/9607012}.

\bibitem{Cohen-T1}
Cohen-Tannoudji C., Diu B., Lal\"{o}e F., Quantum mechanics, Vol.~1, John Wiley
 \& Sons, New York, 1977.

\bibitem{cks}
Cooper F., Khare A., Sukhatme U., Supersymmetry and quantum mechanics,
 \href{https://doi.org/10.1016/0370-1573(94)00080-M}{\textit{Phys. Rep.}} \textbf{251} (1995), 267--385, \href{https://arxiv.org/abs/hep-th/9405029}{hep-th/9405029}.

\bibitem{DeBoer48}
De~Boer J., Quantum theory of condensed permanent gases. {I}.~{T}he law of
 corresponding states, \href{https://doi.org/10.1016/0031-8914(48)90032-9}{\textit{Physica}} \textbf{14} (1948), 139--148.

\bibitem{DeBoer38}
De~Boer J., Michels A., Contribution to the quantum-mechanical theory of the
 equation of state and the law of corresponding states. Determination of the
 law of force of helium, \href{https://doi.org/10.1016/S0031-8914(38)80037-9}{\textit{Physica}} \textbf{5} (1938), 945--957.

\bibitem{dulo}
Duval A., Loday-Richaud M., Kova\v{c}i\v{c}'s algorithm and its application to
 some families of special functions, \href{https://doi.org/10.1007/BF01268661}{\textit{Appl. Algebra Engrg. Comm. Comput.}} \textbf{3} (1992), 211--246.

\bibitem{Gango}
Gangopadhyaya A., Mallow J.V., Rasinariu C., Supersymmetric quantum mechanics.
 An introduction, \href{https://doi.org/10.1142/7788}{World Scientific}, Singapore, 2011.

\bibitem{Reuter-Gozzi}
Gozzi E., Reuter M., Thacker W.D., Symmetries of the classical path integral on
 a generalized phase-space manifold, \href{https://doi.org/10.1103/PhysRevD.46.757}{\textit{Phys. Rev.~D}} \textbf{46} (1992), 757--765.

\bibitem{Guggenheim}
Guggenheim E.A., The principle of corresponding states, \href{https://doi.org/10.1063/1.1724033}{\textit{J.~Chem. Phys.}}
 \textbf{13} (1945), 253--261.

\bibitem{HansenVerlet}
Hansen J.-P., Verlet L., Phase transition on the Lennard-Jones system,
 \href{https://doi.org/10.1103/PhysRev.184.151}{\textit{Phys. Rev.}} \textbf{184} (1969), 151--161.

\bibitem{HK}
Horv\'ath G., Kawazoe K., Method for the calculation of effective pore size
 distribution in molecular sieve carbon, \href{https://doi.org/10.1252/jcej.16.470}{\textit{J.~Chem. Eng. Japan}} \textbf{16} (1983), 470--475.

\bibitem{Hurley}
Hurley A.C., Lennard-Jones J.E., Pople J.A., The molecular orbital theory of
 chemical valency.~XVI. A theory of paired-electrons in polyatomic molecules,
 \href{httsp://doi.org/10.1098/rspa.1953.0198}{\textit{Proc.~R. Soc. Lond. Ser.~A Math. Phys. Eng. Sci.}} \textbf{220} (1953), 446--455.

\bibitem{Jorgensen}
Jorgensen W.L., Transferable intermolecular potential functions for water,
 alcohols and ethers. Application to liquid water, \href{https://doi.org/10.1021/ja00392a016}{\textit{J.~Amer. Chem.
 Soc.}} \textbf{103} (1981), 335--340.

\bibitem{Zumino}
Keller J.B., Zumino B., Determination of intermolecular potentials from
 thermodynamic data and the law of corresponding states, \href{https://doi.org/10.1063/1.1730184}{\textit{J.~Chem.
 Phys.}} \textbf{30} (1959), 1351--1353.

\bibitem{kol}
Kolchin E.R., Differential algebra and algebraic groups, \textit{Pure and
 Applied Mathematics}, Vol.~54, Academic Press, New York~-- London, 1973.

\bibitem{kov86}
Kovacic J.J., An algorithm for solving second order linear homogeneous
 differential equations, \href{https://doi.org/10.1016/S0747-7171(86)80010-4}{\textit{J.~Symbolic Comput.}} \textbf{2} (1986), 3--43.

\bibitem{Binder}
Landau D.P., Binder K., A guide to {M}onte {C}arlo simulations in statistical
 physics, 3rd~ed., \href{https://doi.org/10.1017/CBO9780511994944}{Cambridge University Press}, Cambridge, 2009.

\bibitem{L-J1924I}
Lennard-Jones J.E., On the determination of molecular fields. I.~From the
 variation of the viscosity of a gas with temperature, \href{https://doi.org/10.1098/rspa.1924.0081}{\textit{Proc.~R. Soc.
 Lond. Ser.~A Math. Phys. Eng. Sci.}} \textbf{106} (1924), 441--462.

\bibitem{L-J1924II}
Lennard-Jones J.E., On the determination of molecular fields. II.~From the
 equation of state of a gas, \href{https://doi.org/10.1098/rspa.1924.0082}{\textit{Proc.~R. Soc. Lond. Ser.~A Math. Phys. Eng. Sci.}} \textbf{106} (1924), 463--477.

\bibitem{LJ}
Lennard-Jones J.E., Cohesion, \href{https://doi.org/10.1088/0959-5309/43/5/301}{\textit{Proc. Phys. Soc.}} \textbf{43} (1931), 461--482.

\bibitem{LJ2}
Lennard-Jones J.E., Devonshire A.F., Critical phenomena in gases~-~I,
 \href{https://doi.org/10.1098/rspa.1937.0210}{\textit{Proc.~R. Soc. Lond. Ser.~A Math. Phys. Eng. Sci.}} \textbf{163} (1937), 53--70.

\bibitem{Quarrie}
McQuarrie D.A., Statistical mechanics, University Science Books, Sausalito, 2000.

\bibitem{Mecke}
Mecke M., Winkelmann J., Fischer J., Molecular dynamics simulation of the
 liquid-vapor interface: the {L}ennard-{J}ones fluid, \href{https://doi.org/10.1063/1.475217}{\textit{J.~Chem. Phys.}}
 \textbf{107} (1997), 9264--9270.

\bibitem{Mie}
Mie G., Zur kinetischen Theorie der einatomigen K\"{o}rper, \href{https://doi.org/10.1002/andp.19033160802}{\textit{Ann. Phys.}}
 \textbf{11} (1903), 657--697.

\bibitem{Miller}
Miller M.D., Nosanow L.H., Parish L.J., Zero-temperature properties of matter
 and the quantum theorem of corresponding states. {II}.~The liquid-to-gas
 phase transition for {F}ermi and {B}ose systems, \href{https://doi.org/10.1103/PhysRevB.15.214}{\textit{Phys. Rev.~B}}
 \textbf{15} (1977), 214--229.

\bibitem{mo}
Morales~Ruiz J.J., Differential {G}alois theory and non-integrability of
 {H}amiltonian systems, \href{https://doi.org/10.1007/978-3-0348-8718-2}{\textit{Progress in Mathematics}}, Vol.~179,
 Birkh\"{a}user Verlag, Basel, 1999.

\bibitem{Muleros}
Mulero A., Cuadros F., Isosteric heat of adsorption for monolayers of
 {L}ennard-{J}ones fluids onto flat surfaces, \href{https://doi.org/10.1016/0301-0104(95)00432-7}{\textit{Chem. Phys.}}
 \textbf{205} (1996), 379--388.

\bibitem{Olivier}
Olivier J.P., Modeling physical adsorption on porous and nonporous solids using
 density functional theory, \href{https://doi.org/10.1007/BF00486565}{\textit{J.~Porous Mater.}} \textbf{2} (1995),
 9--17.

\bibitem{Pade}
Pade J., Exact scattering length for a potential of {L}ennard-{J}ones type,
 \href{https://doi.org/10.1140/epjd/e2007-00185-6}{\textit{Eur. Phys.~J.~D}} \textbf{44} (2007), 345--350.

\bibitem{Pitzer}
Pitzer K.S., Corresponding states for perfect liquids, \href{https://doi.org/10.1063/1.1750496}{\textit{J.~Chem. Phys.}}
 \textbf{7} (1939), 583--590.

\bibitem{Martinet-Ramis}
Ramis J.-P., Martinet J., Th\'{e}orie de {G}alois diff\'{e}rentielle et
 resommation, in Computer Algebra and Differential Equations, \textit{Comput. Math.
 Appl.}, Academic Press, London, 1990, 117--214.

\bibitem{semenov}
Semenov-Tian-Shansky M.A., Lax operators, {P}oisson groups, and the differential {G}alois theory,
\href{https://doi.org/10.1007/s11232-014-0212-8}{\textit{Theoret. and Math. Phys.}} \textbf{181} (2014), 1279--1301.

\bibitem{Singer81}
Singer M.F., Liouvillian solutions of {$n$}th order homogeneous linear
 differential equations, \href{https://doi.org/10.2307/2374045}{\textit{Amer.~J. Math.}} \textbf{103} (1981), 661--682.

\bibitem{Storck}
Storck S., Bretinger H., Maier W.F., Characterization of micro- and mesoporous
 solids by physisorption methods and pore-size analysis, \href{https://doi.org/10.1016/S0926-860X(98)00164-1}{\textit{Appl.
 Catalysis~A}} \textbf{174} (1998), 137--146.

\bibitem{vasi}
van~der Put M., Singer M.F., Galois theory of linear differential equations,
 \href{https://doi.org/10.1007/978-3-642-55750-7}{\textit{Grundlehren der Mathematischen Wissenschaften}}, Vol.~328,
 Springer-Verlag, Berlin, 2003.

\bibitem{VdW}
van~der Waals J.D., The law of corresponding states for different substances,
 \textit{Proc. Kon. Nederl. Akad. Wetenschappen} \textbf{15} (1913), 971--981.

\bibitem{witten}
Witten E., Dynamical breaking of supersymmetry, \href{https://doi.org/10.1016/0550-3213(81)90006-7}{\textit{Nuclear Phys.~B}}
 \textbf{188} (1981), 513--554.

\end{thebibliography}
\end{document}